\newcommand{\problemtitle}[1]{\gdef\@problemtitle{#1}}% Store problem title
\newcommand{\probleminput}[1]{\gdef\@probleminput{#1}}% Store problem input
\newcommand{\problemquestion}[1]{\gdef\@problemquestion{#1}}% Store problem question
  \par\addvspace{.5\baselineskip}
  \par\addvspace{.5\baselineskip}
\newcommand{\sig}[1]{\Sigma^{#1}}
\newcommand{\plus}{\mathtt{+}}
\newcommand{\interval}[3]{#1[#2:#3]}
\newcommand{\len}[1]{|#1|}
\newcommand{\set}[1]{\lbrace #1 \rbrace}
\newcommand{\hdist}[2]{d_{\mathtt{HAM}}(#1,#2)}
\newcommand{\edist}[2]{d_{\mathtt{ED}}(#1,#2)}
\newcommand{\match}{\mathtt{Match}}
\newcommand{\misMatch}{\mathtt{MisMatch}}
\newcommand{\minMisMatch}{\mathtt{MinMisMatch}}
\newcommand{\alp}{\mathtt{alph}}
\newcommand{\var}{\mathtt{var}}
\newcommand{\term}{\mathtt{term}}
\newcommand{\regPat}{\mathtt{Reg}}
\newcommand{\oneRepPat}{\mathtt{1RepVar}}
\newcommand{\kRepPat}{\mathtt{kRepVar}}
\newcommand{\oneVarPat}{\mathtt{1Var}}
\newcommand{\nonCrossPat}{\mathtt{NonCross}}
\newcommand{\OV}{\mathtt{OV}}
\newcommand{\MS}{\mathtt{MS}}
\newcommand{\LCP}{\mathtt{LCP}}
\newcommand{\kScdPat}{\mathtt{kSCD}}
\newcommand{\kLocPat}{\mathtt{kLOC}}
\begin{document}
\title{Matching Patterns with Variables Under Edit Distance}
%
%\titlerunning{Abbreviated paper title}
% If the paper title is too long for the running head, you can set
% an abbreviated paper title here
%
\author{Pawe\l{}  Gawrychowski\inst{1}\orcidID{0000-0002-6993-5440} \and
Florin Manea\inst{2}\orcidID{0000-0001-6094-3324} \and
Stefan Siemer\inst{2}\orcidID{0000-0001-7509-8135}}
\authorrunning{Pawe\l{}  Gawrychowski \and Florin Manea \and Stefan Siemer}
% First names are abbreviated in the running head.
% If there are more than two authors, 'et al.' is used.
%
\institute{University of Wroc\l{}aw, Faculty of Mathematics and Computer Science, Poland  \email{gawry@cs.uni.wroc.pl} \and
G\"ottingen University, Computer Science Department and CIDAS, Germany \email{\{florin.manea,stefan.siemer\}@cs.uni-goettingen.de}}
\maketitle              % typeset the header of the contribution
\begin{abstract}

A pattern $\alpha$ is a string of variables and terminal letters. We say that $\alpha$ matches a word $w$, consisting only of terminal letters, if $w$ can be obtained by replacing the variables of $\alpha$ by terminal words. The matching problem, i.e., deciding whether a given pattern matches a given word, was heavily investigated: it is NP-complete in general, but can be solved efficiently for classes of patterns with restricted structure. If we are interested in what is the minimum Hamming distance between $w$ and any word $u$ obtained by replacing the variables of $\alpha$ by terminal words (so matching under Hamming distance), one can devise efficient algorithms and matching conditional lower bounds for the class of regular patterns (in which no variable occurs twice), as well as for classes of patterns where we allow unbounded repetitions of variables, but restrict the structure of the pattern, i.e., the way the occurrences of different variables can be interleaved. Moreover, under Hamming distance, if a variable occurs more than once and its occurrences can be interleaved arbitrarily with those of other variables, even if each of these occurs just once, the matching problem is intractable. In this paper, we consider the problem of matching patterns with variables under edit distance. We still obtain efficient algorithms and matching conditional lower bounds for the class of regular patterns, but show that the problem becomes, in this case, intractable already for unary patterns, consisting of repeated occurrences of a single variable interleaved with terminals. \looseness=-1

\keywords{Pattern with variables \and 
Matching \and
Edit distance}
\end{abstract}

\section{Introduction}
A \emph{pattern with variables} is a string consisting of \emph{constant} or \emph{terminal letters} from a finite alphabet $\Sigma$ (e.g., $\mathtt{a,b,c}$), and \emph{variables} (e.g., $x, y, x_1, x_2$) from a potentially infinite set $\mathcal{X}$, with $\Sigma\cap \mathcal{X} = \emptyset$. In other words, a pattern $\alpha$ is an element of $PAT_{\Sigma} = (\mathcal{X}\cup\Sigma)^{\plus}$. A pattern $\alpha$ is mapped (by a function $h$ called substitution) to a word by substituting the variables by arbitrary strings of terminal letters; as such, $h$ simply maps the variables occurring in $\alpha$ to words over $\Sigma$. For example, $x x \mathtt{bbb} y y $ can be mapped to $\mathtt{aaaabbbbb}$ by the substitution $h$ defined by $(x \to \mathtt{aa}, y \to \mathtt{b})$. In this framework, $h(\alpha)$ denotes the word obtained by substituting every occurrence of a variable $x$ in $\alpha$ by $h(x)$ and leaving all the terminals unchanged. If a pattern $\alpha$ can be mapped to a string of terminals $w$, we say that $\alpha$ matches $w$; the problem of deciding whether there exists a substitution which maps a given pattern $\alpha$ to a given word $w$ is called the {\em (exact) matching problem.} \looseness=-1

\begin{problemdescription}
  \problemtitle{Exact Matching Problem: $\match$}
  \probleminput{A pattern $\alpha$, with $|\alpha|=m$, a word $w$, with $|w|=n$.}
  \problemquestion{Is there a substitution $h$ with $h(\alpha) = w$?}
\end{problemdescription}

$\match$ appears frequently in various areas of theoretical computer science, such as combinatorics on words (e.g., unavoidable patterns~\cite{Loth02}, string solving and the theory of word equations \cite{Loth97}), stringology (e.g., generalized function matching~\cite{ami:gen}), language theory (e.g., pattern languages~\cite{DBLP:journals/jcss/Angluin80}, the theory of extended regular expressions with backreferences~\cite{cam:afo,fri:mas,Fre2013,FreydenbergerSchmid2019}), database theory (e.g., the theory of document spanners~\cite{FreydenbergerHolldack2018,Freydenberger2019,FaginEtAl2015,SchmidSchweikardt2021,SchmidICDT2022,SchmidSchweikardtPODS2022}), or algorithmic learning theory (e.g., the theory of descriptive patterns for finite sets of words~\cite{shi:pat,DBLP:journals/jcss/Angluin80,DBLP:journals/tcs/FernauMMS18}).\looseness=-1

$\match$ is NP-complete \cite{DBLP:journals/jcss/Angluin80}, in general. In fact, a detailed analysis~\cite{ReidenbachS14,shi:pol2,FerSch2015,DBLP:journals/mst/FernauSV16,DBLP:journals/toct/FernauMMS20,schmid13} of the matching problem has provided a better understanding of the parameterized complexity of this problem, highlighting, in particular, several subclasses of patterns for which the matching problem is polynomial, when various structural parameters of patterns are bounded by constants. Prominent examples in this direction are patterns with a bounded number of repeated variables, patterns with bounded scope coincidence degree~\cite{ReidenbachS14}, patterns with bounded locality~\cite{DayFMN17}, or patterns with a bounded treewidth \cite{ReidenbachS14}. See \cite{DBLP:journals/toct/FernauMMS20,DayFMN17,ReidenbachS14} for efficient algorithms solving $\match_{P}$ restricted to (or, in other words, parameterized by) to such classes $P$ of patterns. In general, each of the structural parameters defining such classes $P$ is a number $k$ characterizing in some way the structure of the patterns of the class $P$ and the matching algorithms for the respective class of patterns runs in $O(n^{ck})$ for some constant $c$. Moreover, these restricted matching problems are usually shown to be $W[1]$-hard w.r.t. the respective parameters.  \looseness=-1

In \cite{mfcs2021}, the study of efficient matching algorithms for patterns with variables was extended to an approximate setting. More precisely, the problem of deciding, for a pattern $\alpha$ from a class of patterns $P$ (defined by structural restrictions), a word $w$, and a non-negative integer $\Delta$, whether there exists a substitution $h$ such that the Hamming distance $\hdist{h(\alpha)}{w}$ between $h(\alpha)$ and $w$ is at most $\Delta$ was investigated. The corresponding minimization problem of computing $\hdist{\alpha}{w}=\min\{\hdist{h(\alpha)}{w}\mid h$ is a substitution of the variables of $\alpha\}$ was also considered. The main results of \cite{mfcs2021} were rectangular time algorithms and matching conditional lower bounds for the class of regular patterns $\regPat$ (which contain at most one occurrence of any variable). Moreover, polynomial time algorithms were obtained for unary patterns (also known as one-variable patterns, which consist in one or more occurrences of a single variable, potentially interleaved with terminal strings) or non-cross patterns (which consist in concatenations of unary patterns, whose variables are pairwise distinct). However, as soon as the patterns may contain multiple variables, whose occurrences are interleaved, the problems became NP-hard, even if only one of the variables occurs more than once. As such, unlike the case of exact matching, the approximate matching problem under Hamming distance is NP-hard even if some of the aforementioned parameters (number of repeated variables, scope coincidence degree, treewidth, but, interestingly, not locality) were upper bounded by small constants. \looseness=-1

\textbf{Our Contribution.} In this paper, inspired by, e.g.,  \cite{landauVishkin,fasterPatternMatchingED,elasticDegenerateApprox,sanatizationEDImprv,sanatizationED,fastApproxmatchingUnified,dynamicStringAlignment,approxStringMatchingTour} where various stringology patterns are considered in an approximate setting under {\em edit distance} \cite{EditDistance1,Levenshtein65}, and as a natural extension of the results of \cite{mfcs2021}, we consider the aforementioned approximate matching problems (parameterized by a class of patterns $P$) for the edit distance $\edist{\cdot}{\cdot}$, instead of Hamming Distance:\looseness=-1
\begin{problemdescription}
  \problemtitle{Approximate Matching Decision Problem for $\misMatch_P$}
  \probleminput{A pattern $\alpha \in P$, with $|\alpha|=m$, a word $w$, with $|w|=n$, an integer $\Delta\leq m$.}
  \problemquestion{Is $\edist{\alpha}{w} \leq \Delta$?}
\end{problemdescription}

\begin{problemdescription}
  \problemtitle{Approximate Matching Minimisation Problem for $\minMisMatch_P$}
  \probleminput{A pattern $\alpha \in P$, with $|\alpha|=m$, a word $w$, with $|w|=n$.}
  \problemquestion{Compute $\edist{\alpha}{w}$.}
\end{problemdescription}

Our paper presents two main results, which allow us to paint a rather comprehensive picture of the approximate matching problem under edit distance. 

Firstly, we consider the class of regular patterns, and show that $\misMatch_\regPat$ and $\minMisMatch_\regPat$ can be solved in $O(n\Delta)$ time (where, for $\minMisMatch$, $\Delta$ is the computed result); a matching conditional lower bound follows from the literature \cite{editDistanceQuadraticHardness}. This is particularly interesting because the problem of computing $\edist{\alpha}{w}$ for $\alpha = w_0 x_1 w_1\ldots x_k w_k$ can be seen as the problem of computing the minimal edit distance between any string in which $w_1,\ldots,w_k$ occur, without overlaps, in this exact order and the word $w$. 

Secondly, we show that, unlike the case of matching under Hamming distance, $\misMatch_P$ becomes $W[1]$-hard already for $P$ being the class of unary patterns, with respect to the number of occurrences of the single variable. So, interestingly, the problem of matching patterns with variables under edit distance is computationally hard for all the classes (that we are aware of) of structurally restricted patterns with polynomial exact matching problem, as soon as at least one variable is allowed to occur an unbounded number of times. 

To complement the results presented in this paper, we note that, for the classes of patterns considered in \cite{DBLP:journals/toct/FernauMMS20,DayFMN17,ReidenbachS14,mfcs2021}, which admit polynomial-time exact matching algorithms, one can straightforwardly adapt those algorithms to work in polynomial time in the case of matching under edit distance, when a constant upper bound $k_1$ on the number of occurrences of each variable exists. The complexity of these algorithms is usually  $O(n^{f(k_1,k_2)})$, for a polynomial function $f$ and for $k_2$ being a constant upper bound for the value of the structural parameter considered when defining these classes (locality, scope coincidence degree, treewidth, etc.). If no restriction is imposed on the structure of the pattern, $\match$ (and, as such, the matching under both Hamming and edit distances) is NP-hard even if there are at most two occurrences of each variable \cite{FerSch2015}. 
\section{Preliminaries} \label{sec:Prel}

Some basic notations and definitions regarding strings and patterns with variables were already given in the introduction and,  for more details, we also refer to \cite{DBLP:conf/cwords/ManeaS19,mfcs2021}. 
%For $w\in\sig{\star}$ the length of $w$ is defined the number of symbols of $w$, and denoted as $\len{w}$. Further, let $\sig{n} = \set{w\in\sig{\star}~|~\len{w}=n}$ and $\sig{\leq n} = \bigcup_{i=0}^n \sig{i}$. The letter on position $i$ of $w$, for $1\leq i\leq \len{w}$, is denoted by $\pos{w}{i}$. 
We only recall here some further notations. The set of all patterns, over all terminal-alphabets $\Sigma$, is denoted $PAT = \bigcup_\Sigma PAT_\Sigma$. Given a word or pattern $\gamma$, we denote by $\alp(\gamma) = B$ the smallest set (w.r.t. inclusion) $B\subseteq \Sigma$ and by $\var(\gamma)=Y$ the smallest set $Y\subseteq \mathcal{X}$ such that $\gamma \in (B\cup Y)^\star$. For any symbol $t \in \Sigma \cup \mathcal{X}$ and $\alpha \in PAT_\Sigma$, $\len{\alpha}_t$ denotes the total number of occurrences of the symbol $t$ in $\alpha$. 
%A substitution (of the variables of $\alpha$) is a mapping $h: \var(\alpha) \rightarrow \sig{\star}$. Let $h(\alpha)$ denote the word obtained by substituting every occurrence of a variable $x$ in $\alpha$ by $h(x)$ and leaving all the terminals unchanged. 
%The Matching Problem is defined for any family of patterns $P \subseteq PAT$:
%\begin{problemdescription}
%  \problemtitle{Exact Matching Problem for $P$: $\match_P$}
%  \probleminput{A pattern $\alpha \in P$, with $|\alpha|=m$, a word $w$, with $|w|=n$.}
%  \problemquestion{Is there a substitution $h$ with $h(\alpha) = w$?}
%\end{problemdescription}
For a pattern $\alpha = w_0x_1w_1\ldots w_kx_k$, we denote by $\term(\alpha)=w_0w_1\ldots w_k$ the projection of $\alpha$ on the terminal alphabet $\Sigma$. \looseness=-1

For words $u,w \in \sig{\star}$, the \emph{edit distance} \cite{EditDistance1,Levenshtein65} between $u$ and $w$ is defined as the minimal number $\edist{u}{w}$ of letter insertions, letter deletions, and letter to letter substitutions which one has to apply to $u$ to obtain $w$ . 

We recall some basic facts about the edit distance. Assume that $u$ is transformed into $w$ by a sequence of edits $\gamma$ (i.e., $u$ is aligned to $w$ by $\gamma$). We can assume without losing generality that the edits in $\gamma$ are ordered left to right with respect to the position of $u$ where they are applied. Then, for each factorization $u=u_1\ldots u_k$ of $u$, there exists a factorization $w=w_1\ldots w_k$ of $w$ such that $w_i$ is obtained from $u_i$ when applying the edits of $\gamma$ which correspond to the positions of $u_i$, for $i\in\{1,\ldots,k\}$. Note that this factorization of $w$ is not unique: we assume that the insertions applied at the beginning of $u$ correspond to positions of $u_1$, the insertions applied at the end of $u$ correspond to positions of $u_k$, but the insertions applied between $u_{i-1}$ and $u_i$ can be split arbitrarily in two parts: when considering them in the order in which they occur in $\gamma$ (so left to right w.r.t. the positions of $u$ where they are applied) we assume to first have a (possibly empty) set of insertions which correspond to positions of $u_{i-1}$ and then a (possibly empty) set of insertions which correspond to positions of $u_{i}$. On the other hand, if $w=w_1w_2$, we can uniquely identify the shortest prefix $u_1$ (respectively, the longest prefix $u'_1$) of $u$ from which, when applying the edits of $\gamma$ we obtain the prefix $w_1$ of $w$. 

Now, for a pattern $\alpha$ and a word $w$, we can define the edit distance between $\alpha$ and $w$ as $\edist{\alpha}{w}= \min \{\edist{h(\alpha)}{w}\mid h\mbox{ is a substitution of the variables of }\alpha  \}$. It is worth noting that $\edist{\alpha}{w}\leq |w|+|\term(\alpha)|$. 

With these definitions, we can consider the two pattern matching problems for families of patterns $P\subseteq PAT$, as already defined in the introduction. In the first problem $\misMatch_P$, which extends $\match_P$, we allow for a certain edit distance $\Delta$ between the image $h(\alpha)$ of $\alpha$ under a substitution $h$ and the target word $w$ instead of searching for an exact matching. In the second problem, $\minMisMatch_P$, we are interested in finding the substitution $h$ for which the edit distance between $h(\alpha)$ and the word $w$ is minimal, over all possible choices of $h$.

As a remark, based on our general comments regarding the edit distance, the following theorem follows.
\begin{restatable}{theorem}{thmGeneral}\label{thm:general}
$\misMatch_{PAT}$ and $\minMisMatch_{PAT}$ can be solved in $O( n^{2 k_2 + k_1})$ time, where $k_1$ is the maximum number of occurrences of any variable in the input pattern $\alpha$ and $k_2$ is the total number of occurrences of variables in $\alpha$.
\end{restatable}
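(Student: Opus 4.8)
The plan is to enumerate, for each variable, the "skeleton" of how its image is positioned inside the optimal alignment, and then fill in each resulting gap with a standard edit-distance computation. Write $\alpha = v_0 z_1 v_1 z_2 \cdots z_\ell v_\ell$, where $v_0,\ldots,v_\ell \in \Sigma^\star$ are the maximal terminal blocks and $z_1,\ldots,z_\ell$ are the (not necessarily distinct) variable occurrences, so $\ell = k_2$. Fix an optimal substitution $h$ and an optimal sequence of edits $\gamma$ aligning $h(\alpha)$ to $w$. Using the basic facts about edit distance recalled in the preliminaries, the factorization $h(\alpha) = h(v_0)h(z_1)h(v_1)\cdots h(z_\ell)h(v_\ell)$ induces a factorization $w = w_{v_0} w_{z_1} w_{v_1} \cdots w_{z_\ell} w_{v_\ell}$, where each factor of $w$ is the image of the corresponding factor of $h(\alpha)$ under the edits of $\gamma$. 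This factorization of $w$ is determined by $\ell+1$ cut positions (the boundaries of the $w_{z_i}$ blocks together with the boundaries of the terminal blocks), so there are at most $n^{2(\ell+1)} = n^{2k_2+2}$ choices; but in fact the terminal-block boundaries interleave with the variable-block boundaries, so the factorization is fixed by choosing, for each $i$, the interval $w[a_i:b_i]$ to which $h(z_i)$ is aligned — that is, $2\ell = 2k_2$ position choices, giving $O(n^{2k_2})$ candidate factorizations.

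Second, I would argue that once the factorization of $w$ is guessed, the contribution of each part to the total edit cost can be computed independently and optimally. The terminal blocks are easy: the cost charged to $v_j$ is exactly $\edist{v_j}{w_{v_j}}$, computable in $O(|v_j|\cdot |w_{v_j}|) \le O(n \cdot n)$ time (in fact the $v_j$ have total length $\le m$). For the variable blocks we must be more careful, because occurrences of the same variable $x$ must receive the same image. So instead of guessing a factorization of $w$ directly, I would guess, for each of the (at most $k_2$) distinct variables $x$, the image $h(x)$ implicitly via the list of target intervals $w[a_{i_1}:b_{i_1}],\ldots,w[a_{i_r}:b_{i_r}]$ to which its $r \le k_1$ occurrences are aligned. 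The key combinatorial observation is that, for a fixed variable $x$ with target intervals $W_1,\ldots,W_r$, the optimal common value $u^\star = h(x)$ minimizing $\sum_{j=1}^r \edist{u^\star}{W_j}$ — a ``$1$-center under edit distance'' — need not be guessed explicitly: it suffices to observe that we only ever need the \emph{value} of this minimum, and this minimum (a median/center string problem for $r$ strings) can be computed by a product dynamic program over the $r$ target intervals in time $O\!\left(\prod_j (|W_j|+1)\right) = O(n^{k_1})$, since each $|W_j| \le n$ and $r \le k_1$.

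Third, I would assemble these pieces into a global search. Enumerate all ways of assigning disjoint, in-order intervals of $w$ to the $k_2$ variable occurrences; there are $O(n^{2k_2})$ such assignments (choose the $2k_2$ endpoints, in sorted order, respecting the lengths of the interleaved terminal blocks $v_j$, which pin down the $w_{v_j}$ blocks once the variable intervals are fixed — note the $w_{v_j}$ themselves are \emph{not} forced to equal $v_j$, only their positions are determined by the chosen variable intervals, so actually we must also allow the terminal-block boundaries to float, but this still only adds $O(n)$ per gap and is subsumed). For each assignment: for each terminal block, add $\edist{v_j}{w_{v_j}}$; for each distinct variable $x$, group its assigned intervals and add the center cost $\min_{u}\sum_j \edist{u}{W_j}$ computed by the $O(n^{k_1})$ product DP above. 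Take the minimum over all assignments. Correctness follows from the two directions: any substitution-plus-alignment yields one of the enumerated assignments with total cost at least the value we compute for it (optimality of each local subproblem), and conversely each enumerated assignment with its optimal local fillings yields a genuine substitution $h$ (set $h(x) = $ the minimizing center string) and an alignment of cost equal to the computed value. For $\misMatch$ we compare the minimum to $\Delta$; for $\minMisMatch$ we return it. The running time is $O(n^{2k_2}) \cdot \big(O(k_2 \cdot n^2) + O(k_2 \cdot n^{k_1})\big) = O(n^{2k_2 + k_1})$, absorbing the polynomial factors (and noting $n^2 \le n^{k_1}$ whenever $k_1 \ge 2$; the case $k_1 \le 1$ is the regular-pattern case handled separately and more efficiently).

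The main obstacle, which I would want to nail down carefully, is the interaction between the ``float'' of the terminal-block boundaries and the requirement that repeated variables get identical images: one must be sure that decomposing the cost as (sum over terminal blocks) $+$ (sum over variable centers) does not over- or under-count, i.e. that an optimal alignment can always be chosen in the ``canonical'' form where each $h(z_i)$ is aligned to a contiguous interval $w[a_i:b_i]$ with the $a_i \le b_i \le a_{i+1}$ and the terminal blocks sitting in between — this is exactly what the recalled edit-distance facts (unique shortest/longest prefix of $u$ mapping to a given prefix of $w$; arbitrary-but-consistent splitting of insertions at block boundaries) are there to guarantee, so the argument is routine but needs to be spelled out. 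The second thing to check is the claim that the edit-distance $1$-center of $r$ strings of length $\le n$ is computable in $O(n^r)$ time by a product DP; this is a direct generalization of the $r=2$ edit-distance DP and I would state it as a self-contained lemma.
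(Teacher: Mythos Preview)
Your proposal is correct and follows essentially the same approach as the paper: enumerate the $O(n^{2k_2})$ factorizations of $w$ induced by the $k_2$ variable occurrences, and for each factorization compute, per distinct variable $x$, the median-string cost over its assigned intervals via the Sankoff-style $O(n^{|V_x|}) \le O(n^{k_1})$ product DP. Your hedging about terminal-block ``float'' and the $k_1 \le 1$ corner case is unnecessary---the $2k_2$ cut points fully determine the terminal factors $w_{v_j}$, and the terminal-block edit-distance costs are absorbed in the stated bound---but otherwise the argument matches the paper's.
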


%it is not hard to note that $\misMatch_{PAT}$ and $\minMisMatch_{PAT}$ can be solved in $O( k_1 n^{2 k_2 + k_1})$, where $n$ is the length of the input word, $k_1$ is the maximum number of occurrences of a variable in the input pattern $\alpha$, and $k_2$ is the total number of occurrences of variables in $\alpha$ (see Appendix). 

As mentioned in the Introduction, the result of the previous Theorem can be improved if we consider the two problems for classes of patterns with restricted structure, where we obtain algorithms whose complexity depends on the structural parameter associated to that class, rather than the total number of occurrences of variables.

\section{Our Results}

The first main result of our paper is about the class of regular patterns. A pattern $\alpha$ over the terminal alphabet $\Sigma$ is regular if $\alpha = w_0 (\Pi_{i=1}^{k} x_i w_i)$ where, for $i\in \{1,\ldots,k\}$, $w_i\in \Sigma^*$ and $x_i$ is a variable, and $x_i\neq x_j$ for all $i\neq j$. The class of regular patterns is denoted by $\regPat$. We can show the following theorem.

\begin{restatable}{theorem}{thmMisMatch}\label{thm:misMatch}
$\misMatch_{\regPat}$ can be solved in $O(n\Delta)$ time. For an accepted instance $w,\alpha, \Delta$ of $\misMatch_{\regPat}$ we also compute $\edist{\alpha}{w}$ (which is at most $\Delta$).
\end{restatable}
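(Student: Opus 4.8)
The plan is to identify $\edist{\alpha}{w}$ with the edit distance from $w$ to a very structured regular language. Since $x_1,\dots,x_k$ are pairwise distinct, the set of images $\{h(\alpha)\mid h\text{ a substitution}\}$ is exactly $L(\alpha)=\{w_0\}\sig{\star}\{w_1\}\sig{\star}\cdots\sig{\star}\{w_k\}$, so $\edist{\alpha}{w}=\min\{\edist{u}{w}\mid u\in L(\alpha)\}$. Using the concatenation property of edit distance recalled in Section~\ref{sec:Prel} (an optimal edit sequence factorises along every factorisation of the source word), this equals $\min\sum_{i=0}^{k}\edist{w_i}{W_i}$ over all factorisations $w=W_0U_1W_1\cdots U_kW_k$, where the factors $U_i$ (the images of the variables) are completely free. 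A few normalisations come first: consecutive variables can be merged, so that between two variables there is at least one terminal, hence $k\le|\term(\alpha)|+1$; if $|\term(\alpha)|>n+\Delta$ the instance is rejected (every $h(\alpha)$ has length $\ge|\term(\alpha)|$, forcing $\edist{h(\alpha)}{w}>\Delta$); and if $\Delta\ge n$ (or if $k=0$) the problem is solved directly by the quadratic program below (resp.\ by plain banded edit distance), which runs in $O(n\Delta)$ time here, since after the merge and the rejection rule $|\term(\alpha)|+k=O(n+\Delta)$, which is $O(\Delta)$ when $\Delta\ge n$. Thus we may assume $\Delta<n$ and $|\term(\alpha)|=O(n)$.

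The ``quadratic program'' is the standard edit-distance-to-an-automaton dynamic program on the natural chain automaton for $L(\alpha)$: it has $|\term(\alpha)|+k+1$ states, reading the letters of $w_0,\dots,w_k$ in order, with one $\sig{\star}$-self-loop inserted before each $w_i$ ($1\le i\le k$) for the variable $x_i$. For a state $q$ and a prefix length $c$, let $D[q][c]$ be the least number of edits turning $w[1:c]$ into a word spelled along a path from the initial state to $q$: this obeys the usual edit recurrence on the letter-reading transitions and $D[q][c]=\min(D[q][c-1],D[q_{\mathrm{prev}}][c])$ at a self-loop state (a variable may absorb one more letter of $w$ for free), and the answer is $D[q_{\mathrm{final}}][n]$. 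This costs $O((|\term(\alpha)|+k)\,n)$, which is $O(n\Delta)$ only when $\Delta=\Omega(n)$: the obstruction to confining the table to a diagonal band is precisely that a single self-loop may carry the computation arbitrarily far inside $w$ at no cost.

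For $\Delta<n$ the plan is a Landau--Vishkin--style diagonal dynamic program on this chain automaton. Index a pair (state $q$, prefix length $c$) by the diagonal $d=c-\rho(q)$, where $\rho(q)$ counts the letter-reading transitions on a path to $q$; a match or substitution keeps $d$ fixed, a deletion lowers it by one, and an insertion or a self-loop step raises it by one. Maintain $R[d][e]=$ the largest $\rho(q)$ reachable on diagonal $d$ using at most $e$ edits. The recurrence is the usual Landau--Vishkin one (combining $R[d-1][e-1]$, $R[d][e-1]+1$, $R[d+1][e-1]+1$, then extending by a maximal cost-free run of matches, realised in $O(1)$ via a longest-common-extension query between a suffix of $\term(\alpha)$ and a suffix of $w$, after an $O(|\term(\alpha)|+n)=O(n)$ suffix-structure preprocessing of $\term(\alpha)\#w$), together with one extra term handling the self-loops: a self-loop state reached on diagonal $d$ with $e$ edits is, at no extra cost, reachable on every diagonal $d'\ge d$ with $e$ edits, since a self-loop step only moves rightward in $w$; processing diagonals left to right for each fixed $e$, this is absorbed by carrying a running maximum of the ``activated'' self-loop states, i.e.\ in amortised $O(1)$ per cell. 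As $|\term(\alpha)|=O(n)$ there are $O(n)$ diagonals and $\Delta+1$ edit levels, so the computation takes $O(n\Delta)$ time; and $\edist{\alpha}{w}$, when it is at most $\Delta$, is the least $e$ for which $R[\,n-|\term(\alpha)|\,][e]$ reaches $\rho(q_{\mathrm{final}})=|\term(\alpha)|$.

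The main obstacle is this handling of the self-loops inside the diagonal DP. One must propagate the free rightward jumps so that a self-loop state is declared reachable on a (diagonal, edit-level) pair exactly when it truly is — over-activation underestimates $\edist{\alpha}{w}$, under-activation overestimates it — while keeping the amortised cost constant; pinning down which self-loop states become reachable at which stage, and proving the accompanying furthest-reaching invariant that equates $R[d][e]$ with the reachable $\rho$-values of $D$ (the usual Landau--Vishkin induction, now carrying the free self-loop transitions), is where the real work lies. The remaining ingredients — the concatenation-property reduction, the normalisations, and the easy regimes — are routine.
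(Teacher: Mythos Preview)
Your proposal is correct and takes essentially the same approach as the paper: both reduce to computing edit distance between $\term(\alpha)$ and $w$ with ``free'' insertions at the variable positions, then adapt the Landau--Vishkin diagonal DP by processing diagonals left to right for each fixed edit level and carrying forward the largest reachable free-insertion position (your ``running maximum of activated self-loop states'' is the paper's term $F_g$ with $F_g\le M_{d-1}[\delta]<F_{g+1}$ in the recurrence for $M_d[\delta]$). The paper's contribution is precisely the careful case analysis establishing the furthest-reaching invariant in the presence of these free moves---exactly the part you identify as ``where the real work lies.''
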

\begin{proof}
{\bf Preliminaries and setting.} We begin with an observation. For $\alpha = w_0 (\Pi_{i=1}^{k} x_i w_i)$, we can assume w.l.o.g. that $w_i\in \Sigma^+$ for all $i\leq k$ as otherwise we would have neighboring variables that could be replaced by a single variable; thus, $k\leq |\term(\alpha)|$. 
%For example, the pattern $\alpha=\mathtt{ab} x_1 \mathtt{ab} x_2x_3 \mathtt{baab} $, with $\var(\alpha)=\{x_1,x_2,x_3\}$ is in $\regPat$, and is equivalent (in terms of the words that it matches) to the pattern  $\alpha'=\mathtt{ab} x_1 \mathtt{ab} x_2 \mathtt{baab}$, with the additional information that any substitution of $x_2$ in a match of $\alpha'$ can be split at any position to give substitutions for $x_2$ and $x_3$ in a match of $\alpha$.
To avoid some corner cases, we can assume w.l.o.g. that $\alpha$ and $w$ start with the same terminal symbol (this can be achieved by adding a fresh letter $\$$ in front of both $\alpha$ and $w$). While not fundamental, these simplifications make the exposure of the following algorithm easier to follow.\looseness=-1

Before starting the presentation of the algorithm, we note that a solution for $\misMatch_{\regPat}$ with distance $\Delta=0$ is a solution to $\match_{\regPat}$ and can be solved in $\mathcal{O}(n)$ by a greedy approach (as shown, for instance, in \cite{DBLP:journals/toct/FernauMMS20}). Further, the special case $x_1w_1x_2$ can be solved by an algorithm due to Landau and Vishkin~\cite{landauVishkin} in $\mathcal{O}(n\Delta)$ time. In the following, we are going to achieve the same complexity for the general case of $\misMatch_{\regPat}$ by extending the ideas of this algorithm to accommodate the existence of an unbounded number of pairwise-distinct variables.

One important idea which we use in the context of computing the edit distance between an arbitrary regular pattern and a word is to interpret each regular variable as an arbitrary amount of ``free'' insertions on that position, where ``free'' means that they will not be counted as part of the actual distance (in other words, they do not increase this distance). Indeed, we can see that the factor which substitutes a variable should always be equal to the factor to which it is aligned (after all the edits are performed) from the target word, hence does not add anything to the overall distance (and, therefore, it is ``free''). As such, this factor can be seen as being obtained via an arbitrary amount of letter insertions. Now, using this observation, it is easier to design an $O(nm)$-time algorithm which computes the edit distance between the terminal words $\beta=\term(\alpha)$ (instead of the pattern $\alpha$) and $w$ with the additional property that, for the positions $F_g = \len{(\Pi_{i=0}^{g} |w_i|)}$ for $0 \leq g \leq k-1$, we have that the insertions done between positions $\beta[F_g]$ and $\beta[F_g +1]$ when editing $\beta$ to obtain $w$ do not count towards the total edit distance between $\beta$ and $w$. For simplicity, we denote the set $\set{F_g | 0 \leq g \leq k-1}$ by $F$, we set $F_k=+\infty$, and note that $|\beta|=m-k$ (so $\beta\in \Theta(m)$).\looseness=-1

The description of our algorithm is done in two phases. We first explain how $\misMatch_{\regPat}$ can be solved by dynamic programming in $O(nm)$ time. Then, we refine this approach to an algorithm which fulfills the statement of the theorem. 

When presenting our algorithms, we refer to an alignment of prefixes $\interval{\beta}{1}{j}$ of $\beta$ and $\interval{w}{1}{\ell}$ of $w$, which simply means editing $\interval{\beta}{1}{j}$ to obtain $\interval{w}{1}{\ell}$.\looseness=-1

\smallskip 

{\bf First phase: a classical dynamic programming solution.} We define the $(|\beta|+1)\times (n+1)$ matrix $D[\cdot][\cdot]$, where $D[j][\ell]$ is the edit distance between the prefixes $\interval{\beta}{1}{j}$, with $0 \leq j \leq |\beta|$, and $\interval{w}{1}{\ell}$, with $0 \leq \ell \leq n$, with the additional important property that the insertions done between positions $\beta[F_g]$ and $\beta[F_g +1]$, for $F_g\leq j$, are not counted in this distance (they correspond to variables in the pattern $\alpha$). As soon as this matrix is computed, we can retrieve the edit distance between $\alpha$ and $w$ from the element $D[m-k][n]$. Clearly, now the instance $(\alpha,w,\Delta)$ of $\misMatch_\regPat$ is answered positively if and only if $D[m-k][n]\leq \Delta$. So, let us focus on an algorithm computing this matrix.

The elements of the matrix $D[\cdot][\cdot]$ can be computed by dynamic programming in $O(mn)$ time (see Appendix). 
Moreover, by tracing back the computation of $D[m-k][n]$, we obtain a path consisting in elements of the matrix, leading from $D[0][0]$ to $D[m-k][n]$, which encodes the edits needed to transform $\beta$ into $w$. An edge between $D[j-1][\ell]$ and $D[j][\ell]$ corresponds to the deletion of $\beta[j]$; and edge between $D[j-1][\ell-1]$ and $D[j][\ell]$ corresponds to a substitution of $\beta[j]$ by $w[\ell]$, or to the case where $\beta[j]$ and $w[\ell]$ are left unchanged, and will be aligned in the end. Moreover, an edge between $D[j][\ell-1]$ and $D[j][\ell]$ corresponds to an insertion of $w[\ell]$ after position $j$ in $\beta$; this can be a free insertion too (and part of the image of a variable of $\alpha$), but only when $j\in F$. 
This concludes the first phase of our proof. \looseness=-1
%, and for completeness we list the resulting algorithm in Figure \ref{fig:firstalgo}.

\smallskip 

{\bf Second phase: a succinct representation and more efficient computation of the dynamic programming table.} In the second phase of our proof, we will focus on how to solve $\misMatch_{\regPat}$ more efficiently. The idea is to avoid computing all the elements of the matrix $D[\cdot][\cdot]$, and compute, instead, only the relevant elements of this matrix, following the ideas of the algorithm by Landau and Vishkin \cite{landauVishkin}. The main difference between the setting of that algorithm (which can be directly used to compute the edit distance between two terminal words or between a word $w$ and a pattern $\alpha$ of the form $xuy, xu$, or $uy$, where $x$ and $y$ are variables and $u$ is a terminal word) and ours is that, in our case, the diagonals of the matrix $D[\cdot][\cdot]$ are not non-decreasing (when traversed in increasing order of the rows intersected by the respective diagonal), as we now also have free insertions which may occur at various positions in $\beta$ (not only at the beginning and end). This is a significant complication, which we will address next.

The main idea of the optimization done in this second phase is that we could actually compute and represent the matrix $D[\cdot][\cdot]$ more succinctly, by only computing and keeping track of at most $\Delta$ relevant elements on each diagonal of this matrix, where relevant means that we cannot explicitly rule out the existence of a path leading from $D[0][0]$ to $D[m-k][n]$ which goes through that element. 

For the clarity of exposure, we recall that the diagonal $d$ of the matrix $D[\cdot][\cdot]$ is defined as the array of elements $D[j][\ell]$ where $\ell- j = d$ (ordered in increasing order w.r.t. the first component $j$), where $-|\beta|+1 \leq d \leq n$. Very importantly, for a diagonal $d$, we have that if $D[j][j+d]\leq D[j+1][j+1+d]$ then $ D[j+1][j+1+d]-D[j][j+d]\leq 1$; however, it might also be the case that $D[j][j+d]> D[j+1][j+1+d]$, when $D[j+1][j+1+d]$ is obtained from $D[j+1][j+d]$ by a free insertion. 

\smallskip 

{\bf Analysis of the diagonals, definition of $M_d[\delta]$ and its usage.} Now, for each diagonal $d$, with $-|\beta|+1 \leq d \leq n$, and $\delta\leq \Delta$, we define $M_d[\delta]= \max\{j\mid  D[j][j+d]=\delta,$ and $ D[j'][j'+d]>\delta$ for all $j'> j\}$ (by convention, $M_d[\delta]=-\infty$, if $\{j\mid  D[j][j+d]=\delta,$ and $ D[j'][j'+d]>\delta$ for all $j'> j\}=\emptyset$). That is, $M_d[\delta]$ is the greatest row where we find the value $\delta $ on the diagonal $d$ and, moreover, all the elements appearing on greater rows on that diagonal are strictly greater than $\delta$ (or $M_d[\delta]=-\infty$ if such a row does not exist). 

Note that if a value $\delta$ appears on diagonal $d$ and there exists some $j'$ such that $D[j][j+d]\geq \delta$ for all $j\geq j'$, then, due to the only relations which may occur between two consecutive elements of $d$, we have that $M_d[\delta]\neq -\infty$. In particular, if a value $\delta$ appears on diagonal $d$ then $M_d[\delta]\neq -\infty$ if and only if $D[|\beta|][|\beta|+d]\geq \delta$. Consequently, if there exists $k>0$ such that $M_d[\delta-k]=|\beta|$ then $M_d[\delta]= -\infty$. 

In general, all values $M_d[\delta]$ which are equal to $-\infty$ are not relevant to our computation. To understand which other values $M_d[\delta]$ are not relevant for our algorithm, we note that if there exist some $k> 0$ and $s\geq 0$ such that $M_{d+s}[\delta-k]= |\beta|$ then it is not needed to compute $M_{d-g}[\delta+h]$, for any $g,h\geq 0$, at all, as any path going from $D[0][0]$ to $D[|\beta|][n]$, which corresponds to an optimal sequence of edits, does not go through $D[M_{d-g}[\delta+h]][M_{d}[\delta+h]+d]$. If $s=0$, then it is already clear that $M_{d}[\delta]= -\infty$, and we do not need to compute it. If $s\geq 1$, it is enough to show our claim for $h=0$ and $g=0$. Indeed, assume that the optimal sequence of edits transforming $\beta$ into $w$ corresponds to a path from $D[0][0]$ to $D[|\beta|][n]$ going through $D[M_{d}[\delta]][M_{d}[\delta]+d]$. By the fact that $M_{d}[\delta]$ is the largest $j$ for which $D[j][j+d]\leq \delta$, we get that this path would have to intersect, after going through $D[M_{d}[\delta]][M_{d}[\delta]+d]$, the path from $D[0][0]$ to $D[M_{d+s}[\delta-k]][M_{d+s}[\delta-k]+d+s]=D[|\beta|][|\beta|+d+s]$ (which goes only through elements $\leq \delta-k$). As $k>0$, this is a contradiction, as the path from $D[0][0]$ to $D[|\beta|][n]$ going through $D[M_{d}[\delta]][M_{d}[\delta]+d]$ goes only through elements $\geq \delta$ after going through $D[M_{d}[\delta]][M_{d}[\delta]+d]$. So, $M_d[\delta]$ is not relevant if there exist $k>0$ and $s>0$ such that $M_{d+s}[\delta-k]=m$. 

Once all relevant values $M_d[\delta]$ are computed, for $d$ diagonal and $\delta \leq \Delta$, we simply have to check if $M_{n-|\beta|}[\delta] = |\beta|$ (i.e., $D[|\beta|][n]=\delta$) for some $\delta \leq \Delta$. So, we can focus, from now on, on how to compute the relevant elements $M_d[\delta]$ efficiently. In particular, all these elements are not equal to $-\infty$. 

\smallskip 

{\bf Towards an algorithm: understanding the relations between elements on consecutive diagonals.} Let us now understand under which conditions $D[j][\ell]=\delta$ holds, as this is useful to compute $M_d[\delta]$. In general, this means that there exists a path leading from $D[0][0]$ to $D[j][\ell]$ consisting only in elements with value $\leq \delta$, and which ends with a series of edges belonging to the diagonal $d=\ell-j$, that correspond to substitutions or to letters being left unchanged. In particular, if all the edges connecting $D[j'][j'+d]$ and $D[j][\ell]$ on this path correspond to unchanged letters, then $\interval{\beta}{j'}{j}$ is a common prefix of $\interval{\beta}{j'}{|\beta|}$ and $\interval{w}{j'+d}{n}$. Looking more into details, there are several cases when $D[j][\ell]=\delta$. \looseness=-1

If $j\notin F$ and $\beta[j]\neq w[\ell]$, then $D[j-1][\ell-1]\geq \delta-1$ and $D[j-1][\ell]\geq \delta-1$ and $D[j][\ell-1]\geq \delta-1$ and at least one of the previous inequalities is an equality (i.e., one of the following must hold: $D[j][\ell-1]=\delta-1$ or $D[j-1][\ell-1]=\delta-1$ or $D[j-1][\ell]=\delta-1$). If $j\notin F$ and $\beta[j]= w[\ell]$, then $D[j-1][\ell-1]\geq \delta$ and $D[j-1][\ell]\geq \delta-1$ and $D[j][\ell-1]\geq \delta-1$ and at least one of the previous inequalities is an equality.

If $j\in F$ and $\beta[j]\neq w[\ell]$, then $D[j-1][\ell-1]\geq \delta-1$ and $D[j-1][\ell]\geq \delta-1$ and $D[j][\ell-1]\geq \delta$ and at least one of the previous inequalities is an equality. If $j\in F$ and $\beta[j]= w[\ell]$ then $D[j-1][\ell-1]\geq \delta$ and $D[j-1][\ell]\geq \delta-1$ and $D[j][\ell-1]\geq \delta$ and at least one of the previous inequalities is an equality.

Moving forward, assume now that $M_d[\delta]=j\neq -\infty$. This means that $D[j][\ell]=\delta,$ and $ D[j''][j''+d]>\delta$ for all $j''> j$. By the observations above, there exists $j'\leq j$ such that $D[j'][j'+d]=\delta$ and the longest common prefix of $\interval{\beta}{j'}{|\beta|}$ and $\interval{w}{j'+d}{n}$ has length $j-j'+1$, i.e., it equals $\interval{\beta}{j'}{j}$. The last part of this statement means that once we have aligned $\interval{\beta}{1}{j'}$ to $\interval{w}{1}{j'+d}$, we can extend this alignment to an alignment of $\interval{\beta}{1}{j}$ to $\interval{w}{1}{j+d}$ by simply leaving the symbols of $\interval{\beta}{j'+1}{j}$ unchanged. 

Let us see now what this means for the elements of diagonals $d$, $d+1$, and $d-1$. 

Firstly, we consider the diagonal $d$. Here we have that $j'\geq M_{d}[\delta-1]+1$. Note that if $\delta-1$ appears on diagonal $d$ then $M_{d}[\delta-1]\neq -\infty$. 

Secondly, we consider the diagonal $d+1$. Here, for all rows $\ell$ with $j'\leq \ell\leq j$, we have that $D[\ell-1][\ell+d]\geq \delta-1$ and $D[j''-1][j''+d]>\delta-1$, for all $j''$ with $|\beta|\geq j''>j$. Therefore, if $\delta-1$ appears on diagonal $d+1$, either $D[m][m+d+1]\leq d-1$ or $M_{d+1}[\delta-1]\neq -\infty$ and $M_{d}[\delta-1]+1\leq j$.
%then either $M_{d+1}[\delta-1]\neq -\infty$ and either $M_{d+1}[\delta-1]=m$ holds or we have that $j\geq M_{d+1}[\delta-1]+1$, or $M_{d+1}[\delta-1]= -\infty$ and there exists some $g>0$ with $M_{d+1}[\delta-g]=m$. 
%and the longest common prefix of $\interval{\beta}{M_{d+1}[\delta-1]+1}{|\beta|}$ and $\interval{w}{M_{d+1}[\delta-1]+1+d}{n}$ has length $j-M_{d+1}[\delta-1]$. 

Finally, we consider the diagonal $d-1$. Here, for all rows $\ell$ with $j'\leq \ell\leq j$, we have that $D[\ell][\ell+d-1]\geq \delta-1$ and $D[j''][j''+d-1]\geq \delta$, for all $j''$ with $m\geq j''>j$. Thus, either all elements on the diagonal $d-1$ are $\geq \delta$, or $\delta-1$ occurs on diagonal $d-1$ and $M_{d-1}[\delta-1]\neq -\infty$. In the second case, when $M_{d-1}[\delta-1]\neq -\infty$, we have that $j\geq M_{d-1}[\delta-1]$ as, otherwise, we would have that $D[M_{d-1}[\delta-1]][M_{d-1}[\delta-1]+d]\leq \delta$ and $M_{d-1}[\delta-1]>j$, a contradiction. 

Still on diagonal $d-1$, if $\delta$ occurs on it, then $M_d[\delta]\neq -\infty$ holds. So, for $g\leq k-1$ with $F_g\leq M_{d-1}[\delta]<F_{g+1}$, we have that $F_g\leq M_d[\delta]$. Indeed, otherwise we would have two possibilities. If the path connecting $D[0][0]$ to $D[M_{d-1}[\delta]][M_{d-1}[\delta]+d-1]$ via elements $\leq d$ intersects row $F_g$ on $D[F_g][F_g+d']$ for some $d'\leq d$, then $D[F_g][F_g +d]\leq D[F_g][F_g+d']\leq \delta$ and $F_g >j$, a contradiction. If the path connecting $D[0][0]$ to $D[M_{d-1}[\delta]][M_{d-1}[\delta]+d-1]$ via elements $\leq d$ intersects row $F_g$ on $D[F_g][F_g+d']$ for some $d'> d$, then the respective path will also intersect diagonal $d$ on a row $>j$ before reaching $M_{d-1}[\delta]$, a contradiction with the fact that $j$ is the last row on diagonal $d$ where we have an element $\leq \delta$. 

%Concluding the analysis of the diagonals $d,d+1,d-1$ we note that if $M_d[\delta]\neq \infty$, then either there exists an element $d-1$ on the path (containing only elements $\leq \delta$) from $D[0][0]$ to $D[M_d[\delta]][M_d[\delta]+d]$. The last such element is either on diagonal $d$, and the edges following it on that path correspond to a substitution first and then a series of unchanged letters, or on diagonal $d+1$, and the edges following it on that path correspond to a deletion first and then a series of unchanged letters,.

So, for $M_d[\delta]$ to be relevant, we must have $D[|\beta|][|\beta|+d+1]\geq \delta$ (so there exists no $k> 0$ such that $M_{d+1}[\delta-k]= |\beta|$). In this case, if $M_d[\delta]=j$, then the following holds. The path (via elements $\leq d$) from $D[0][0]$ to $D[j][j+d]$ goes through an element $D[g][g+d']=\delta-1$. If the last such element on the respective path is on diagonal $d$, then it must be $M_{d}[\delta-1]$. If it is on diagonal $d-1$, then either  $g=M_{d-1}[\delta-1]$ (and then the path moves on diagonal $d$ via an edge corresponding to an insertion) or $g<M_{d-1}[\delta-1]$ (and then the path moves on diagonal $d$ via an edge corresponding to an insertion); in this second case, we could replace the considered path by a path connecting $D[0][0]$ to $D[M_{d-1}[\delta-1]][M_{d-1}[\delta-1]+d-1]$ (via elements $\leq \delta-1$), which then moves on diagonal $d$ via an edge corresponding to an insertion, and continues along that diagonal (with edges corresponding to letters left unchanged). If $D[g][g+d']$ is on diagonal $d+1$ (i.e., $d'=d+1$) then, just like in the previous case, we can simply consider the path connecting $D[0][0]$ to $D[M_{d+1}[\delta-1]][M_{d+1}[\delta-1]+d+1]$ (via elements $\leq \delta-1$), which then moves on diagonal $d$ via an edge corresponding to a deletion, and then continues along diagonal $d$ (with edges corresponding to letters left unchanged). If $D[g][g+d']$ is on none of the diagonals $d-1,d,d+1$ then we reach diagonal $d$ by edges corresponding to free insertions from some diagonal $d''<d$. The respective path also intersects diagonal $d-1$ (when coming from $d''$ to $d$ by free insertions), so diagonal $d-1$ contains $\delta$ and $M_{d-1}[\delta]\neq \infty$, and we might simply consider as path between $D[0][0]$ and $D[j][j+d]$ the path reaching diagonal $d-1$ on position $D[F_g][F_g+d-1]$ (via elements $\leq \delta$), where $F_g\leq M_{d-1}[\delta]<F_{g+1}$, which then moves on diagonal $d$ by an edge corresponding to a free insertion, and then continues along $d$ (with edges corresponding to letters left unchanged, as $F_g$ is greater or equal to the row where the initial path intersected diagonal $d$). This analysis covers all possible cases.

\smallskip 

{\bf Computing $M_d[\delta]$.} Therefore, if $M_d[\delta]$ is relevant (and, as such, $M_d[\delta]\neq -\infty$), then $M_d[\delta]$ can be computed as follows. Let $g$ be such that $F_g\leq M_{d-1}[\delta]<F_{g+1}$ (and $g=-1$ and $F_g=-\infty$ if $ M_{d-1}[\delta]=-\infty$). Let $H=\max\{M_{d-1}[\delta-1], F_g, M_{d}[\delta-1]+1, M_{d+1}[\delta-1]+1\}$ (as explained, in the case we are discussing, at least one of these values is not $-\infty$). Then we have that $j\geq H$ and the longest common prefix of $\interval{\beta}{H+1}{|\beta|}$ and $\interval{w}{H+d+1}{n}$ is exactly $\interval{\beta}{H+1}{j}$ (or we could increase $j$). So, to compute $j=M_d[\delta]$, we compute $H$ and then we compute the longest common prefix $\interval{\beta}{H+1}{j}$ of $\interval{\beta}{H+1}{|\beta|}$ and $\interval{w}{H+d+1}{n}$. 

In general, $M_d[\delta]$ is not relevant either because there exists some $s\geq 0$ and $\delta'<\delta$ such that $M_{d+s}[\delta']=|\beta|$ or because all elements of diagonal $d$ are strictly greater than $\delta$. In the second case, we note that all values $M_{d-1}[\delta-1]$, $F_g$, $M_{d}[\delta-1]$, and $M_{d+1}[\delta-1]$ must be $-\infty$ (as otherwise the diagonal $d$ would contain an element equal to $\delta$), so our computation of $M_d[\delta]$ returns $-\infty$ (which is correct). 

Now, based on these observations, we can see a way to compute the relevant values $M_d[\delta]$, for $-|\beta|\leq d\leq n$ and $\delta \leq \Delta$ (without computing the matrix $D$). 

We first construct the word $\beta$ and longest common prefix data structures for the word $\beta w$, allowing us to compute $\LCP(\interval{\beta}{h}{|\beta|},\interval{w}{h+d}{n})$, the length of the longest common prefix of $\interval{\beta}{h}{|\beta|}$ and $\interval{w}{h+d}{n}$ for all $h$ and $d$. 

Then, we will compute the values of $M_d[0]$ for all diagonals $d$. Basically, we need to identify, if it exists, a path from $D[0][0]$ to $D[M_d[0]][M_d[0]+d]$ which consists only of edges corresponding to letters left unchanged, or to free insertions. By an analysis similar to the one done above, we can easily show that $M_{0}[0]$ is $\LCP(\interval{\beta}{1}{|\beta|},\interval{w}{1}{n})$ (which is $\geq 1$, by our assumptions). Further, $M[d][0]=-\infty$ for $d<0$ and, for $d\geq 0$, $M_{d}[0] = F_g + \LCP(\interval{\beta}{F_g+1}{|\beta|},\interval{w}{F_g+1+d}{n})$, where $F_g\in F$ is such that $F_g\leq M_{d-1}[0]< F_{g+1}$ ($M_{d}[0]=-\infty$ if such an element $F_g$ does not exist). 

Further, for $\delta$ from $1$ to $\Delta$ we compute all the values $M_d[\delta]$, in order for $d$ from $-|\beta|+1$ to $n$. We first compute the largest diagonal $d'$ such that $M_{d'}[\delta-k]=|\beta|$, for some $k>0$. We will only compute $M_d[\delta]$, for $d$ from $d'+1$ to $n$. For each such diagonal $d$, we compute $g$ such that $F_g\leq M_{d-1}[\delta]<F_{g+1}$ and $H=\max\{M_{d-1}[\delta-1], F_g, M_{d}[\delta-1]+1, M_{d+1}[\delta-1]+1\}$. Then we set $M_d[\delta]$ to be $H+\LCP(\interval{\beta}{H+1}{|\beta|},\interval{w}{H+d+1}{n})-1$. 

\smallskip 

{\bf Conclusions.} This algorithm, which computes all relevant values $M_d[\delta], $can be implemented in $O((n+m)\Delta)$ time, as discussed in the Appendix (where also its pseudocode is given). As explained before, this allows us to solve $\misMatch_{\regPat}$ for the input $(\alpha,w,\Delta)$. Moreover, if the instance can be answered positively, the value $\delta$ for which $M_{n-|\beta|}[\delta] = |\beta|$ equals $\edist{\alpha}{w}$.
\qed \end{proof}

The following result now follows.
\begin{restatable}{theorem}{thmMinMisMatch}\label{thm:minMisMatch}
$\minMisMatch_{\regPat}$ can be solved in $O(n\Phi)$ time, where $\Phi=\edist{\alpha}{w}$.
\end{restatable}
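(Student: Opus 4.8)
The plan is to obtain $\minMisMatch_{\regPat}$ from $\misMatch_{\regPat}$ by exponential search on the threshold parameter, exploiting the fact that the algorithm of Theorem~\ref{thm:misMatch} not only decides whether $\edist{\alpha}{w}\le\Delta$ but, on an accepted instance, also outputs the exact value $\edist{\alpha}{w}$. First I would dispose of the case $\edist{\alpha}{w}=0$: as observed in the proof of Theorem~\ref{thm:misMatch}, this coincides with the exact-matching problem $\match_{\regPat}$, which is solvable in $O(n)$ time by the greedy algorithm of~\cite{DBLP:journals/toct/FernauMMS20}; if it answers yes, output $0$.

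Otherwise, I would run the algorithm of Theorem~\ref{thm:misMatch} on the instances $(\alpha,w,2^i)$ for $i=0,1,2,\dots$ in turn, halting at the first index $t$ for which the instance is accepted, and returning the value $\edist{\alpha}{w}$ reported by that call. Termination and correctness are immediate: $\edist{\alpha}{w}$ is finite (indeed, at most $n+|\term(\alpha)|$, as noted in the Preliminaries), the thresholds $2^0<2^1<\cdots$ strictly increase, so some call eventually accepts, and the value it reports is $\edist{\alpha}{w}$ by Theorem~\ref{thm:misMatch}.

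For the running time, write $\Phi=\edist{\alpha}{w}\ge 1$. Since the call with threshold $2^{t-1}$ was rejected (for $t\ge 1$), we have $2^{t-1}<\Phi$, i.e.\ $2^{t}<2\Phi$. The $i$-th call costs $O(n\,2^i)$ by Theorem~\ref{thm:misMatch}, so the total cost of the search is $\sum_{i=0}^{t}O(n\,2^i)=O(n\,2^{t+1})=O(n\Phi)$, plus the $O(n)$ spent on the initial exact-matching test, which is dominated. Hence the whole procedure runs in $O(n\Phi)$ time.

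There is no substantive obstacle here: the argument is the standard doubling trick, and the two points worth checking are (i) that the routine of Theorem~\ref{thm:misMatch} returns the exact distance on accepted instances — which it does by its statement — and (ii) that the doubling is performed on $\Delta$ rather than on the word length, so that the costs of the successive calls form a geometric progression dominated by the last one. The $\Phi=0$ corner case is the only thing needing separate treatment, and it is handled by the preliminary linear-time check (alternatively, one may simply start the search at $\Delta=1$, which already yields the bound for all $\Phi\ge 1$ and costs $O(n)$ when $\Phi=0$).
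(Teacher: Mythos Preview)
Your proof is correct and is essentially identical to the paper's: both apply the algorithm of Theorem~\ref{thm:misMatch} with thresholds $\Delta=2^i$ for increasing $i$ and sum the resulting geometric series to obtain $O(n\Phi)$. You are slightly more careful in isolating the $\Phi=0$ corner case, but otherwise the arguments coincide.
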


The upper bounds reported in Theorems \ref{thm:misMatch} and \ref{thm:minMisMatch} are complemented by the following conditional lower bound, known from the literature \cite[Thm.  3]{editDistanceQuadraticHardness} (see Appendix). 
\begin{restatable}{theorem}{regLowerBound}\label{thm:regLowerBound}
$\misMatch_{\regPat}$ can not be solved in time $\mathcal{O}(|w|^h \Delta^g)$ (or $\mathcal{O}(|w|^h |\alpha|^g)$) where $h+g= 2-\epsilon$ with $\epsilon>0$, unless the Orthogonal Vectors Conjecture fails. \looseness=-1
\end{restatable}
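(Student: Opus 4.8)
The plan is to reduce from the Orthogonal Vectors problem (OV): given two sets $A, B$ of $N$ Boolean vectors each of dimension $D = \mathrm{polylog}(N)$, decide whether there exist $a \in A$, $b \in B$ with $\langle a, b\rangle = 0$. Under the Orthogonal Vectors Conjecture, OV cannot be solved in time $O(N^{2-\epsilon}\,\mathrm{poly}(D))$ for any $\epsilon > 0$. I would invoke the known reduction of \cite[Thm.~3]{editDistanceQuadraticHardness}, which already encodes an OV instance as a pair of words $u, v$ with $|u|, |v| = O(N\,\mathrm{poly}(D))$ such that computing $\edist{u}{v}$ (or even deciding whether it is below a carefully chosen threshold) solves OV; moreover that construction has the property that the threshold $\Delta^{\star}$ used in the decision version is $\Theta(|u|)$, i.e.\ linear in the word length. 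This linearity is the crucial feature that makes the lower bound apply to $\Delta$ as a parameter and not just to $|\alpha|$.

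The reduction itself is then immediate: a plain word $u$ is trivially a regular pattern (one with zero variables), so $\edist{\alpha}{w} = \edist{u}{v}$ for $\alpha = u$ and $w = v$, and the instance $(\alpha, w, \Delta^{\star})$ of $\misMatch_{\regPat}$ is a yes-instance exactly when the OV instance is. Now suppose, for contradiction, that $\misMatch_{\regPat}$ could be solved in time $O(|w|^h \Delta^g)$ with $h + g = 2 - \epsilon$. Since $|\alpha| = |u| = O(N\,\mathrm{poly}(D))$, $|w| = |v| = O(N\,\mathrm{poly}(D))$, and $\Delta^{\star} = \Theta(|w|) = O(N\,\mathrm{poly}(D))$, plugging these bounds in gives an algorithm for OV running in time $O\!\big((N\,\mathrm{poly}(D))^{h+g}\big) = O\!\big(N^{2-\epsilon}\,\mathrm{poly}(D)\big)$, contradicting the Orthogonal Vectors Conjecture. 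The same computation with $|\alpha|$ in place of $\Delta$ handles the parenthetical variant $O(|w|^h |\alpha|^g)$, since $|\alpha|$ is also $O(N\,\mathrm{poly}(D))$.

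I would structure the write-up as: (i) recall the OV problem and the conjecture; (ii) quote the precise statement of the word construction from \cite{editDistanceQuadraticHardness}, emphasizing the two facts I need — word lengths linear in $N$ (up to $\mathrm{poly}(D)$ factors) and decision threshold linear in the word length; (iii) observe that words are a degenerate special case of regular patterns; (iv) carry out the time-bound arithmetic above. The main (and essentially only) obstacle is a bookkeeping one: making sure the cited hardness construction really does give a \emph{decision} threshold that is $\Theta(|w|)$ rather than, say, $\Theta(1)$ or $\Theta(\sqrt{|w|})$ — if the off-the-shelf reduction only yields a small threshold, I would instead pad both words with a common block (e.g.\ a long run of a fresh terminal appended to both $u$ and $v$) that forces a controlled number of additional edits, raising $\edist{u}{v}$ to $\Theta(|w|)$ without changing whether the original threshold inequality holds; this padding is routine and does not affect the word lengths asymptotically. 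Everything else is a direct substitution of parameters, so no genuinely hard step remains.
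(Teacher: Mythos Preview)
Your approach is correct and matches the paper's: both simply invoke \cite[Thm.~3]{editDistanceQuadraticHardness} as an immediate consequence, and the paper gives no argument beyond that citation. Two minor remarks: the paper instantiates with $\alpha = xuy$ (so the lower bound holds even for patterns with variables, i.e., the approximate-pattern-matching case) rather than a variable-free pattern as you do, though your choice is equally valid for the stated theorem; and your padding fallback is misdescribed---appending the \emph{same} fresh block to both $u$ and $v$ leaves $\edist{u}{v}$ unchanged, so it would not raise the threshold (you would need to pad only one string, or pad with distinct blocks)---but this is moot since the Backurs--Indyk construction already has $\Delta^{\star}=\Theta(|w|)$.
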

It is worth noting that the lower bound from Theorem \ref{thm:regLowerBound} already holds for very restricted regular patterns, i.e., for $\alpha=x u y$, where $u$ is a string of terminals and $x$ and $y$ are variables. Interestingly, a similar lower bound (for such restricted patterns) does not hold in the case of the Hamming distance, covered in \cite{mfcs2021}.  

Our second main result addresses another class of restricted patterns. To this end, we consider the class of unary (or one-variable) patterns $\oneVarPat$, which is defined as follows: $\alpha \in \oneVarPat$ if there exists $x\in X$ such that $\var(\alpha)= \{x\}$. An example of unary pattern is $\alpha_1=\mathtt{ab} x \mathtt{ab} xx \mathtt{baab}$. 

We can show the following theorem.
\begin{restatable}{theorem}{thmUnaryW}\label{thm:unaryW1}
$\misMatch_\oneVarPat$ is $W[1]$-hard w.r.t. the number of occurrences of the single variable $x$ of the input pattern $\alpha$.
\end{restatable}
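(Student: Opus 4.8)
The plan is to reduce from a $W[1]$-hard problem parameterized by some value $k$, and to build a unary pattern $\alpha$ together with a word $w$ such that the minimal edit distance $\edist{\alpha}{w}$ encodes whether the source instance is a ``yes''-instance, while the number of occurrences of the single variable $x$ in $\alpha$ is bounded by a function of $k$ only. The natural source problems are $k$-\textsc{Clique} or, more promisingly, \textsc{Multicolored Clique}/\textsc{Subgraph Isomorphism}-type problems, since the combinatorial core of a unary pattern is that \emph{all} occurrences of $x$ must be substituted by the \emph{same} terminal word $u$, so $w$ must contain $|\alpha|_x$ near-copies of $u$ at prescribed relative positions (separated by the terminal blocks $w_0,w_1,\ldots$ of $\alpha$). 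This ``one word reused many times'' phenomenon is exactly what lets a single shared object be consistently checked against many constraints at once, which is the hallmark of $W[1]$-hardness constructions. I would therefore try to make $u=h(x)$ encode the chosen solution (e.g.\ the $k$ selected vertices of a clique, written as an integer tuple via blocks of terminals), and make the terminal parts of $\alpha$ and the word $w$ encode the graph so that $\edist{\alpha}{w}$ is small exactly when the $k$ vertices encoded by $u$ form a clique.

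First I would fix an alphabet and a uniform encoding of integers/vertices into terminal blocks (using enough padding or a ``large budget'' of identical letters so that cheap local edits cannot forge an encoding), and specify the variable word as $u = \#\,c_{v_1}\,\#\,c_{v_2}\,\#\cdots\#\,c_{v_k}\,\#$ where $c_{v_i}$ is the codeword of the $i$-th chosen vertex. Then I would design $\alpha$ as a unary pattern with roughly $O(k^2)$ occurrences of $x$: conceptually one occurrence of $x$ per pair $(i,j)$ of color classes, sandwiched by terminal gadgets that, together with the corresponding region of $w$, can be aligned with low edit cost if and only if the pair $(v_i,v_j)$ read off from $u$ is an edge of $G$. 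The target word $w$ would then be a concatenation, over all pairs $(i,j)$ and all edges $\{a,b\}$ with $a$ in class $i$, $b$ in class $j$, of a block that ``matches'' the copy of $u$ embedded between the appropriate terminal gadgets precisely when $\{v_i,v_j\}=\{a,b\}$; everything else contributes a fixed, large-but-accountable amount to the distance. The key quantitative step is to set a threshold $\Delta$ so that $\edist{\alpha}{w}\le\Delta$ forces, for every pair, the existence of a matching edge, and conversely a genuine clique yields an alignment of cost exactly $\Delta$; this is where I would lean on Theorem~\ref{thm:general} only indirectly, and instead argue the gap combinatorially.

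The main obstacle — and the part I expect to require the most care — is \emph{enforcing consistency} of $u$ across all $O(k^2)$ occurrences of $x$ while still keeping $\Delta$ small. Under Hamming distance such a construction is impossible (that is precisely the contrast the paper draws), and the reason unary patterns become hard under edit distance is that a badly chosen $u$ cannot be ``repaired'' cheaply at many sites simultaneously: each of the $\Theta(k^2)$ occurrences of $x$ that is aligned to a non-matching block must pay a cost proportional to $|u|$, so by making $|u|$ much larger than $\Delta/k^2$ I can force at least one consistent global choice of $u$. Making this robust requires (i) choosing the vertex codewords so that any two distinct codewords have large edit distance (e.g.\ by spacing codeword values and using runs of a fresh letter as separators), (ii) padding the terminal gadgets so that the only way to achieve the threshold is the ``intended'' alignment in which the $(i,j)$-th copy of $x$ lands exactly on the $w$-block for pair $(i,j)$, ruling out shifts that would let one physical substring of $w$ serve as several copies, and (iii) carefully accounting for the free-insertion behavior of variables so that the adversary cannot exploit long insertions. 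Once this gap amplification is in place, correctness in both directions and the parameter bound ($|\alpha|_x = O(k^2)$, which is a function of the source parameter alone) follow by bookkeeping, and the reduction is clearly polynomial-time, establishing $W[1]$-hardness of $\misMatch_{\oneVarPat}$ with respect to $|\alpha|_x$.
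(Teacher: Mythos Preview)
Your overall plan---reduce from a $W[1]$-hard problem so that the single variable $x$ is forced to encode a global ``solution object'' that must simultaneously satisfy many local constraints---is the right intuition, but you are reaching for a much heavier hammer than the paper does. The paper reduces from \textsc{Median String} ($\MS$): given strings $w_1,\ldots,w_k$ and a bound $\Delta$, decide whether some $s$ has $\sum_i \edist{w_i}{s}\le\Delta$. This problem is already known to be $W[1]$-hard in $k$, and it is \emph{precisely} the combinatorial core of unary-pattern matching under edit distance. Setting $\alpha=(x\,B)^k$ and $w=w_1 B\, w_2 B\cdots w_k B$ for a sufficiently long separator block $B=(\$^S\#^S)^S$ over fresh letters (with $S$ a polynomial in the input size) gives $\edist{\alpha}{w}=\min_s\sum_i\edist{s}{w_i}$, because the separators are so massive relative to $\Delta$ that any near-optimal alignment must align them to one another and hence align the $i$th copy of $x$ to $w_i$. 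The whole proof then reduces to one short technical lemma showing the separators cannot be profitably shifted, and one gets $|\alpha|_x=k$ exactly.

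Your Clique-style route could in principle be made to work, but it incurs all the gadgetry you anticipate (codewords with large pairwise edit distance, per-pair edge-selection blocks, a delicate budget analysis), and it yields $|\alpha|_x=\Theta(k^2)$ rather than $k$. More importantly, the obstacle you single out as the hard part---``enforcing consistency of $u$ across all occurrences''---is not where the difficulty lies: consistency is free, since every occurrence of $x$ receives the same substitution by definition of a pattern. The genuine work in any such reduction is forcing the \emph{intended alignment} of $h(\alpha)$ with $w$, and in your construction this is complicated by the asymmetry that, for each pair $(i,j)$, the word $w$ must offer a choice among many edge-blocks while $\alpha$ contributes only a single $x$ there; budgeting the deletions of the unselected edge-blocks without letting the adversary exploit them elsewhere is nontrivial. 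The $\MS$ reduction sidesteps all of this because there is exactly one $w_i$ per occurrence of $x$, so the only thing left to argue is that the separators pin down the alignment.
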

\begin{proof}[Sketch]
We begin by recalling the following problem:
\begin{problemdescription}
    \problemtitle{Median String: $\MS$}
    \probleminput{$k$ strings $w_1, \ldots, w_k \in \sigma^*$ and an integer $\Delta$.}
    \problemquestion{Does there exist a string $s$ such that $\sum_{i=1}^k \edist{w_i}{s}\leq \Delta$?\newline (The string $s$ for which $\sum_{i=1}^k \edist{w_i}{s}$ is minimum is called the median string of the strings $\{w_1,\ldots,w_k\}$.)}
\end{problemdescription}
Without loss of generality, we can assume that $\Delta\leq \sum_{i=1}^k |w_i|$ as, otherwise, the answer is clearly yes (for instance, for $s=\varepsilon$ we have that $\sum_{i=1}^k \edist{w_i}{\varepsilon}\leq \sum_{i=1}^k |w_i|$). Similarly, we can assume that $|s|\leq \Delta + \max\{|w_i|\mid i\in \{1,\ldots,k\} \}.$ In \cite{weightedMedianStringHardness5} it was shown that $\MS$ is NP-complete even for binary input strings and W[1]-hard with respect to the parameter $k$, the number of input strings. 

We will reduce now $\MS$ to $\misMatch_\oneVarPat$, such that an instance of $\MS$ with $k$ input strings is mapped to an instance of $\misMatch_\oneVarPat$ with exactly $k$ occurrences of the variable $x$ (the single variable occurring in the pattern). 

Thus, we consider an instance of $\MS$ which consists in the $k$ binary strings $w_1, \ldots, w_k \in \{0,1\}^*$ and the integer $\Delta\leq \sum_{i=1}^k |w_i|$. 
The instance of $\misMatch_\oneVarPat$ which we construct consists of a word $w$ and a pattern $\alpha$, such that $\alpha$ contains exactly $k$ occurrences of a variable $x$, and both strings are of polynomial size w.r.t. the size of the $\MS$-instance. Moreover, the bound on the $\edist{\alpha}{w}$ defined in this instance of $\misMatch_\oneVarPat$ equals $\Delta$. That is, if there exists a solution for the $\MS$-instance such that $\sum_{i=1}^k \edist{w_i}{s}\leq \Delta$, then, and only then, we should be able to find a solution of the $\misMatch_\oneVarPat$-instance with $\edist{\alpha}{w}\leq \Delta$.
The construction of the $\misMatch_\oneVarPat$ instance is realized in such a way that the word $w$ encodes the $k$ input strings for $\MS$, conveniently separated by some long strings over two fresh symbols $\$,\#$, while $\alpha$ can be obtained from $w$ by simply replacing each of the words $w_i$ by a single occurrence of the variable $x$. Intuitively, in this way, for $\edist{\alpha}{w}$ to be minimal, $x$ should be mapped to the median string of $\{w_1,\ldots,w_k\}$.
In this proof sketch, we just define the reduction. The proof of its correctness is given in the Appendix. \looseness=-1

For the strings $w_1, \ldots w_k \in \{0,1\}^*$, let $S=6 (\sum_{i=1}^k|w_i|)$; clearly, $S\geq 6 \Delta$. Let $w=w_1(\$^S \#^S)^Sw_2(\$^S \#^S)^S \ldots w_k(\$^S \#^S)^S$ and $\alpha = \left( x(\$^S \#^S)^S \right)^k$. 

The constructed instance of $\misMatch_\oneVarPat$ (i.e., $\alpha, w,\Delta$) is of polynomial size w.r.t. the size of the $\MS$-instance (i.e., $\{w_1,\ldots,w_k\}, \Delta$). Therefore, it (and our entire reduction) can be computed in polynomial time. Moreover, we can show that the instance $(w,\alpha,\Delta)$ of $\misMatch_\oneVarPat$ is answered positively if and only if the original instance of $\MS$ is answered positively. Finally, as the number of occurrences of the variable $x$ blocks in $\alpha$ is $k$, where $k$ is the number of input strings in the instance of $\MS$, and $\MS$ is $W[1]$-hard with respect to this parameter, it follows that $\misMatch_\oneVarPat$ is also $W[1]$-hard when the number of occurrences of the variable $x$ in $\alpha$ is considered as parameter. The statement follows.
\qed \end{proof}

A simple corollary of Theorem \ref{thm:general} is the following:
\begin{restatable}{theorem}{thmUnaryUpperBound}\label{thm:unaryUpperBound}
$\misMatch_{\oneVarPat}$ and $\minMisMatch_{\oneVarPat}$ can be solved in $O( n^{3 |\alpha|_x})$ time, where $x$ is the single variable occurring in $\alpha$.
\end{restatable}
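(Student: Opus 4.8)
The plan is to obtain Theorem~\ref{thm:unaryUpperBound} as an immediate specialization of Theorem~\ref{thm:general}, the only work being to identify the two structural parameters of that theorem with the single quantity $\len{\alpha}_x$. First I would note that $\oneVarPat \subseteq PAT$, so Theorem~\ref{thm:general} applies verbatim to any instance $(\alpha, w, \Delta)$ of $\misMatch_{\oneVarPat}$ and to any instance $(\alpha, w)$ of $\minMisMatch_{\oneVarPat}$.

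Next I would compute the parameters $k_1$ and $k_2$ from Theorem~\ref{thm:general} for a pattern $\alpha \in \oneVarPat$. By definition of $\oneVarPat$, there is exactly one variable $x$ with $\var(\alpha) = \set{x}$, so every occurrence of a variable in $\alpha$ is an occurrence of $x$. Consequently the maximum number of occurrences of any variable in $\alpha$ is $k_1 = \len{\alpha}_x$, and the total number of occurrences of variables in $\alpha$ is likewise $k_2 = \len{\alpha}_x$. Substituting $k_1 = k_2 = \len{\alpha}_x$ into the running time $O(n^{2k_2 + k_1})$ of Theorem~\ref{thm:general} yields $O(n^{2\len{\alpha}_x + \len{\alpha}_x}) = O(n^{3\len{\alpha}_x})$ for both problems, which is exactly the claimed bound.

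There is no genuine obstacle here: the whole argument is the routine observation that for a one-variable pattern the "per-variable" and "total" occurrence counts coincide. I might add a remark that this bound is almost certainly far from tight for unary patterns -- in the exact-matching and Hamming-distance settings such patterns admit genuinely polynomial algorithms -- but the purpose of the statement is only to contrast with the $W[1]$-hardness of Theorem~\ref{thm:unaryW1}, recording that the problem nevertheless lies in XP when parameterized by $\len{\alpha}_x$.
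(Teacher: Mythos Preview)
Your proposal is correct and matches the paper's approach exactly: the paper states this theorem as ``a simple corollary of Theorem~\ref{thm:general}'' without further proof, and your derivation that $k_1 = k_2 = \len{\alpha}_x$ for a unary pattern, giving $2k_2 + k_1 = 3\len{\alpha}_x$, is precisely the intended one-line specialization.
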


Clearly, finding a polynomial time algorithm for $\misMatch_{\oneVarPat}$, for which the degree of the polynomial does not depend on $|\alpha|_x$, would be ideal. Such an algorithm would be, however, an FPT-algorithm for $\misMatch_{\oneVarPat}$, parameterized by $|\alpha|_x$, and, by Theorem \ref{thm:unaryW1} and common parameterized complexity assumptions, the existence of such an algorithm is unlikely. This makes the straightforward result reported in Theorem~\ref{thm:unaryUpperBound} relevant, to a certain extent.
\section{Conclusion}
Our results regarding the problem $\misMatch$ for various classes of patterns are summarized in Table \ref{results}, which highlights the differences to the case of exact matching and to the case of approximate matching under Hamming distance. 

\begin{table}[H]
\scriptsize
{
\begin{center}
\caption{\footnotesize Our new results are listed in column 4. The results overviewed in column 3 were all shown in \cite{mfcs2021}. We assume $|w|=n$ and $|\alpha|=m$.}
\label{results}

\smallskip

\begin{tabular}{|l|l|l|l|}
\hline
Class & $\match(w,\alpha)$ & $\misMatch(w,\alpha,\Delta)$ & $\misMatch(w,\alpha,\Delta)$ \\ 
 & & for $\hdist{\cdot}{\cdot}$ & for $\edist{\cdot}{\cdot}$ \\ 
\hline
$\regPat$ &$O(n)$ [folklore] & $O(n\Delta)$, matching  & $O(n\Delta)$, matching  \\
 & & cond. lower bound & cond. lower bound \\
\hline
$\oneVarPat$ & $O(n)$ [folklore] & $O(n)$  & $O(n^{3|\alpha|_x})$\\
$(\var(\alpha)=\{x\})$ & & & W[1]-hard w.r.t. $|\alpha|_x$  \\
\hline
$\nonCrossPat$ & $O(nm\log n)$ \cite{DBLP:journals/toct/FernauMMS20}& $O(n^3p)$  & NP-hard \\
\hline
$\oneRepPat$ & $O(n^2)$ \cite{DBLP:journals/toct/FernauMMS20} & $O(n^{k+2}m)$ & NP-hard for $k\geq 1$ \\
$k$=\# $x$-blocks & & W[1]-hard w.r.t. $k$  &  \\
\hline
$\kLocPat$ & $O(mkn^{2k+1})$ \cite{DayFMN17} & $O(n^{2k+2}m)$ & NP-hard for $k\geq 1$ \\
 & W[1]-hard w.r.t. $k$ & W[1]-hard w.r.t. $k$ & \\
\hline
$\kScdPat$ & $O(m^2n^{2k})$ \cite{DBLP:journals/toct/FernauMMS20} & NP-hard for $k\geq 2$ & NP-hard for $k\geq 1$ \\
 & W[1]-hard w.r.t. $k$ & & \\
\hline
$\kRepPat$ & $O(n^{2k})$ \cite{DBLP:journals/toct/FernauMMS20}  & NP-hard for $k\geq 1$ & NP-hard for $k\geq 1$\\
 & W[1]-hard w.r.t. $k$ & & \\
\hline
$k$-bounded & $O(n^{2k+4})$ \cite{ReidenbachS14} & NP-hard for $k\geq 3$ & NP-hard for $k \geq 1$ \\
treewidth  & W[1]-hard w.r.t. $k$ & & \\
\hline
\end{tabular}
\vspace{-1cm}
\end{center}}
\end{table}

Note that the results reported in the first row of the rightmost column of this table are based on Theorem \ref{thm:misMatch} (the upper bound) and Theorem \ref{thm:regLowerBound} (the lower bound). The rest of the cells of that rightmost column are all consequences of the result of Theorem \ref{thm:unaryW1}. Indeed, the classes of patterns covered in this table, which are presented in detail in \cite{mfcs2021}, are defined based on a common idea. In the pattern $\alpha$, we identify for each variable $x$ the $x$-blocks: maximal factors of $\alpha$ (w.r.t. length) which contain only the variable $x$ and terminals, and start and end with $x$. Then, classes of patterns are defined based on the way the blocks defined for all variables occurring in $\alpha$ are interleaved. However, in the patterns of all these classes, there may exist at least one variable which occurs an unbounded number of times, i.e., they all include the class of unary patterns. Therefore, the hardness result proved for unary patterns carries over and, as the structural parameters used to define those classes do not take into account the overall number of occurrences of a variable, but rather the number of blocks for the variables (or the way they are interleaved), we obtain NP-hardness for $\misMatch$ for that class, even if the structural parameters are trivial. 

While our results, together with those reported in \cite{mfcs2021}, seem to completely characterize the complexity of $\misMatch$ and $\minMisMatch$ under both Hamming and edit distances, there are still some directions for future work. Firstly, in \cite{fineGrainedMedianString} the fine-grained complexity of computing the median string under edit distance for $k$ input strings is discussed. Their main result, a lower bound, was only shown for inputs over unbounded alphabets; it would be interesting to see if it still holds for alphabets of constant size. Moreover, it would be interesting to obtain similar lower bounds for $\misMatch_{\oneVarPat}$, as the two problem seem strongly related. To that end, it would be interesting if the upper bound of Theorem \ref{thm:unaryUpperBound} can be improved, and brought closer to the one reported for median string in \cite{sankoff75}. Secondly, another interesting problem is related to Theorem \ref{thm:regLowerBound}. The lower bound we reported in that theorem holds for regular patterns with a constant number of variables (e.g., two variables). It is still open what is the complexity of $\misMatch$ for regular patterns with a constant number of variables under Hamming distance. \looseness=-1

%
% ---- Bibliography ----
%
% BibTeX users should specify bibliography style 'splncs04'.
% References will then be sorted and formatted in the correct style.
%
\bibliographystyle{splncs04}
\bibliography{references}

\newpage

\appendix

\section{Computational Model}
\label{compModel}
The computational model we use to describe our results is the standard unit-cost RAM with logarithmic word size: for an input of size $n$, each memory word can hold $\log n$ bits. Arithmetic and bitwise operations with numbers in $[1:n]$ are, thus, assumed to take $O(1)$ time. Numbers larger than $n$, with $\ell$ bits, are represented in $O(\ell/\log n)$ memory words, and working with them takes time proportional to the number of memory words on which they are represented. In all the problems, we assume that we are given a word $w$ and a pattern $\alpha$, with $|w|=n$ and $|\alpha|=m$, over a terminal-alphabet $\Sigma=\{1,2,\ldots,\sigma\}$, with $|\Sigma|=\sigma\leq n+m$. The variables are chosen from the set $\{x_1,\ldots, x_m\}$ and can be encoded as integers between $n+1$ and $n+m$. That is, we assume that the processed words are sequences of integers (called letters or symbols), each fitting in $O(1)$ memory words. This is a common assumption in string algorithms: the input alphabet is said to be {\em an integer alphabet}. For instance, the same assumption was also used for developing efficient algorithms for $\match$ in \cite{DBLP:journals/tcs/FernauMMS18,mfcs2021}. For a more detailed general discussion on this computational model see, e.g.,~\cite{crochemore}.

\section{Longest Common Prefix data structure (LCP)}

Given a word $w$, of length $n$, we can construct in $O(n)$-time {\em longest common prefix}-data structures which allow us to return in $O(1)$-time the value $LCP_w(i,j)=max\{|v|\mid v\mbox{ is a prefix of both } w[i:n]\mbox{ and }w[j:n]\}$. See \cite{DBLP:conf/icalp/KarkkainenS03,DBLP:journals/jacm/KarkkainenSB06} and the references therein. 
Now, given a word $w$, of length $n$, and a word $\beta$, of length $m$, we can construct in $O(n+m)$-time data structures which allow us to return in $O(1)$-time the value 
$\LCP(\interval{w}{i}{n},\interval{\beta}{j}{m})$, the length of the longest common prefix of the strings $\interval{\beta}{j}{m}$ and $\interval{w}{i}{n}$ for all $j$ and $i$.
In other words, $\LCP(\interval{w}{i}{n},\interval{\beta}{j}{m}) = max\{|v|\mid v\mbox{ is a prefix of both } w[i:n]\mbox{ and }\beta [j:m]\}$. This is achieved by constructing $LCP_{w\beta}$-data structures for the word $w\beta $, as above, and noting that $\LCP(\interval{w}{i}{n},\interval{\beta}{j}{m})=\min(LCP_{w\beta} (i,n+j),n-i)$. 
 
\section{Proofs} 

\subsection*{Proof of Theorem \ref{thm:general}}

\thmGeneral*
\begin{proof}
We only give the proof for $\minMisMatch_{PAT}$. 

Assume the input pattern is $\alpha = u_0 x_1 u_1 \ldots x_{k_2} u_{k_2}$ from $PAT_\Sigma$, where $x_i$ is a variable, for $i\in \{1,\ldots,{k_2}\}$, and $w_i\in \Sigma^*$ a terminal word, for $i\in \{0,\ldots,{k_2}\}$. Note that there might be the case that $x_i=x_j$ for some $i\neq j$, as there are no restrictions on the structure of the pattern $\alpha$. 

We make several observations. 

Let $h$ be a substitution of the variables from $\alpha$, such that $h(x_i)=t_i$, for $i\in \{1,\ldots,{k_2}\}$. Then, $h(\alpha)=u_0 t_1 u_1 \ldots t_{k_2} u_{k_2}$. When computing the edit distance $\edist{h(\alpha)}{w}$, one obtains a factorization of $w=w_0 w'_1 w_1\ldots w'_{k_2} w_{k_2}$ such that the optimal sequence of edits transforming $h(\alpha)$ into $w$ transforms $u_i$ into $w_i$, for $i\in \{0,\ldots,{k_2}\}$, and $t'_i$ into $w_i$, for $i\in \{1,\ldots,{k_2}\}$. 

Now, let $V_x=\{i\in \{1,\ldots,{k_2}\}\mid x_i=x\}$ and assume $h$ is a substitution of the variables from $\alpha$ such that $\edist{h(\alpha)}{w}$ is minimal w.r.t. all possible substitutions of the variables of $\alpha$. Moreover, let $h(x_i)=t_i$ for $i\in \{1,\ldots,{k_2}\}$. As before, there exists a factorization of $w=w_0 w'_1 w_1\ldots w'_{k_2} w_{k_2}$ such that the optimal sequence of edits transforming $h(\alpha)$ into $w$ transforms $u_i$ into $w_i$, for $i\in \{0,\ldots,{k_2}\}$, and $t'_i$ into $w_i$, for $i\in \{1,\ldots,{k_2}\}$. In this case, from the fact that $h$ is optimal, it is immediate that $h(x)=s_x$ where $s_x$ is the median string of $\{w'_i\mid i\in V_x\}$. 

Based on these observations, we can use the following algorithm solving $\minMisMatch_{PAT}$. 

For each $x\in \var(\alpha)$, define $V_x=\{i\in \{1,\ldots,{k_2}\}\mid x_i=x\}$. For each factorization $f$ of $w=w_0 w'_1 w_1\ldots w'_{k_2} w_{k_2}$ and for each variable $x$: compute the median string $s_x$ of $\{w'_i\mid i\in V_x\}$; define the substitution $h_f$ which maps $x$ to $s_x$ for all $x$; compute the edit distance $\edist{h_f(\alpha)}{w}$. After considering each possible factorization $f$, return the substitution $h_f$ for which $\edist{h_f(\alpha)}{w}$ is minimal. 

In the above algorithm, to compute the median string of $\{w'_i\mid i\in V_x\}$, we use the algorithm of \cite{sankoff75}. This algorithm runs in $O(\ell_x^{|V_x|})$, where $\ell_x= \max\{|w'_i|\mid i\in V_x\}$. Therefore, the running time of our algorithm can be upper bounded by $O(n^{2{k_2}} n^{k_1} )$, so also by $O(n^{2{k_2}+k_1} )$. 
\qed \end{proof}

\subsection*{Algorithms from the proof of Theorem \ref{thm:misMatch}}

\subsubsection*{Computing matrix $D[\cdot][\cdot]$.}

The elements of the matrix $D[\cdot][\cdot]$ can be computed by dynamic programming. The base cases are $D[j][0] = j$, for all $j\leq \beta$ and $D[0][\ell] = \ell$. In the case of computing $D[0][\ell]$, we simply insert all the letters of $\interval{w}{1}{\ell}$ in $\interval{\beta}{1}{0}=\varepsilon$, while in the case of $D[j][0]$ we are deleting all letters from $\interval{\beta}{1}{j}$ (and, if we refer to the edits in $\alpha$, where we also have variables, then we substitute all the variables of the prefix of $\alpha$ which corresponds to $\interval{\beta}{1}{j}$ by the empty word, as well). 

The rest of the elements of $D[\cdot][\cdot]$ are now computed according to two cases. 

Firstly, we consider the computation of $D[j][\ell]$ for $j \notin F$. In this case, we cannot use the aforementioned free insertions, so the element $D[j][\ell]$ is computed as in the case of computing the usual edit distance between two strings. 
        \begin{equation*}
              D[j][\ell] = min\begin{cases}
                D[j-1][\ell] + 1, & \text{$\beta[j]$ is deleted in the alignment of $\interval{\beta}{1}{j}$} \\
                $ $ & \text{to $\interval{w}{1}{\ell}$};\\
                D[j][\ell-1] + 1, & \text{$w[\ell]$ is inserted after position $j$ of $\beta$ in } \\
                $ $ & \text{the alignment of $\interval{\beta}{1}{j}$ to $\interval{w}{1}{\ell}$};\\
                D[j-1][\ell-1] + 1, & \text{$\beta[j]$ is substituted by $w[j]$ in} \\
                & \text{the alignment of $\interval{\beta}{1}{j}$ to $\interval{w}{1}{\ell}$};\\
                D[j-1][\ell-1], & \text{$\beta[j]$ is left unchanged in the alignment}\\
                $ $& \text{of $\interval{\beta}{1}{j}$ to $\interval{w}{1}{\ell}$ because $\beta[j]=w[\ell]$}.\\
              \end{cases}
        \end{equation*}

The more interesting case is when $j \in F$ and we can use free insertions. Naturally, our starting point is still represented by the four possible cases based on which we computed $D[j][\ell]$ when $j\notin F$. However, the case corresponding to the insertion of $w[\ell]$ to extend an alignment of $\interval{\beta}{1}{j}$ and $\interval{w}{1}{\ell-1}$ to an alignment of $\interval{\beta}{1}{j}$ and $\interval{w}{1}{\ell}$ can now be obtained by a free insertion, instead of an insertion of cost $1$. This brings us to the main difference between the two cases. In this case, an alignment between $\interval{\beta}{1}{j}$ and $\interval{w}{1}{\ell}$ can be obtained as follows. We first obtain an alignment of $\interval{\beta}{1}{j}$ to some prefix $\interval{w}{1}{\ell-k}$ of $w$ and then use free insertions to append $\interval{w}{\ell-k+1}{\ell}$ to the edited pattern, and, as such, obtain $\interval{w}{1}{\ell}$. But, this also means that we first obtain an alignment of $\interval{\beta}{1}{j}$ to some prefix $\interval{w}{1}{\ell-k}$ of $w$ and then use free insertions to append $\interval{w}{\ell-k+1}{\ell-1}$ to the edited pattern, and, as such, obtain an alignment of the pattern to $\interval{w}{1}{\ell-1}$, and then insert (again, without counting this towards the edit distance) $w[\ell]$ to obtain $\interval{w}{\ell-k+1}{\ell}$. Thus, in this case, an alignment between  $\interval{\beta}{1}{j}$ and $\interval{w}{1}{\ell}$ which uses free insertions corresponding to the position $j\in F$ is obtained from an alignment between  $\interval{\beta}{1}{j}$ and $\interval{w}{1}{\ell-1}$ followed by an additional free insertion. We obtain, as such, the following recurrence relation for $D[j][\ell]$, when $j\in F$: \looseness=-1
        \begin{equation*}
              D[j][\ell] = min\begin{cases}
                D[j-1][\ell] + 1, & \text{$\beta[j]$ is deleted};\\
                D[j-1][\ell-1] + 1, & \text{$\beta[j]$ is substituted by $w[\ell]$, if $\beta[j]\neq w[\ell]$};\\
                D[j-1][\ell-1], & \text{$\beta[j]$ is left unchanged, if $\beta[j]=w[\ell]$};\\
                D[j][\ell-1], & \text{$w[\ell]$ is inserted after position $j$, for free}.\\
              \end{cases}
        \end{equation*}

Using the two recurrence relation above, we can compute the elements of the matrix $D$ by dynamic programming (for $j$ from $0$ to $m-k$, for $\ell $ from $0 $ to $n$) in $O(nm)$ time. 

Moreover, by tracing back the computation of $D[m-k][n]$, we obtain a path consisting of elements of the matrix, leading from $D[0][0]$ to $D[m-k][n]$, which encodes the edits needed to transform $\beta$ into $w$. An edge between $D[j-1][\ell]$ and $D[j][\ell]$ corresponds to the deletion of $\beta[j]$; and edge between $D[j-1][\ell-1]$ and $D[j][\ell]$ corresponds to a substitution of $\beta[j]$ by $w[\ell]$, or to the case where $\beta[j]$ and $w[\ell]$ are left unchanged, and will be aligned in the end. Moreover, an edge between $D[j][\ell-1]$ and $D[j][\ell]$ corresponds to an insertion of $w[\ell]$ after position $j$ in $\beta$; this can be a free insertion too (and part of the image of a variable of $\alpha$), but only when $j\in F$. 

A listing of an algorithm computing $D[\cdot][\cdot]$ is given in Figure \ref{fig:firstalgo}. 

\begin{figure}[H]
    \centering
    \scalebox{.75}{
        \begin{algorithm}[H]
            \SetAlgoLined
            \KwData{$w, \alpha$}
            \KwResult{minimal edit distance between $\alpha$ and $w$ in $\mathcal{O}(nm)$}
            compute $\beta$ and the set $F$\;
            \For{$j\gets[0$ \KwTo $|\beta|]$}{
                $D[j][0]\gets j$\;
            }
            \For{$\ell\gets[0$ \KwTo $n]$}{
                $D[0][\ell]\gets \ell$\;
            }
            $g \gets 0\;$\\
            \For{$j\gets[0$ \KwTo $|\beta|]$}{
                \eIf{$j = F_g$}
                {
                    \For{$\ell\gets[0$ \KwTo $n]$}{
                    \[ 
                        D[j][\ell]\gets min 
                        \begin{cases} 
                            D[j][\ell-1],   & \text{free insertion}\\
                            D[j-1][\ell]+1,  & \text{deletion}\\ 
                            D[j-1][\ell-1]+1,& \text{substitution}\\
                            D[j-1][\ell-1],  & \text{if } w[\ell]=\beta[j]\\
                        \end{cases}\;
                    \]
                    }
                   $ g \gets g + 1\;$
                }{
                    \For{$\ell\gets[0$ \KwTo $n]$}{
                    \[ 
                        D[j][\ell]\gets min 
                        \begin{cases} 
                            D[j][\ell-1]+1,  & \text{insertion}\\
                            D[j-1][\ell]+1,  & \text{deletion}\\ 
                            D[j-1][\ell-1]+1,& \text{substitution}\\
                            D[j-1][\ell-1],  & \text{if } w[\ell]=\beta[j]\\
                        \end{cases}\;
                    \]
                        
                    }  
                }
            }
            return $D[|\beta|][n]$
        \end{algorithm}
    }
    \caption{Algorithm to compute $D[\cdot][\cdot]$ in $\mathcal{O}(nm)$ time.}
    \label{fig:firstalgo}
\end{figure}

\subsubsection*{Data structures for computing $M_d[\delta]$.}

We first use a linear time algorithm for the computation of the longest common prefix data structures for $\beta$ and $w$ (see the section of this Appendix about such data structures and \cite{DBLP:journals/jacm/KarkkainenSB06}). Secondly, we use an auxiliary array $G$ of size $|\beta|+1$, which stores for each positive integer $i\leq \beta$ the value $G[i]=\max \{g\mid F_g\leq i\}$, and can be computed in linear time. This allows us to efficiently retrieve the values $F_g$. Finally, while computing the values $M_d[\delta]$, for $d$ and $\delta$, we can maintain the value $d'$ of the greatest diagonal such that there exist $k$ with $M_{d'}[\delta-k]=|\beta|$: when we are done with computing all the values $M_d[\delta-1]$, for all $d$, we simply check if we need to update $d'$ because we might have found some $d''>d'$ for which $M_{d''}[\delta-1]=|\beta|$. 

\subsubsection*{Computing $M_d[\delta]$.}
The algorithm for computing the relevant values $M_d[\Delta]$ and how these are used to solve Problem $\misMatch_\regPat$ is given in Figure \ref{fig:diagalgo}.

\begin{figure}[H]
    \centering
    \scalebox{.75}{
    \begin{algorithm}[H]
            \SetAlgoLined
            \KwData{$w, \alpha$}
            \KwResult{minimal edit distance between $\alpha$ and $w$ in $\mathcal{O}((n+m)\Delta)$}
            construct $\beta$\;
            construct $F$\;
            init $g \gets -1$\;
            construct $LCP_{\beta, w}$\;
            \For{$d \gets [-\len{\beta}$ \KwTo $0]$}{
                $M_d[0] \gets -\infty$\;
            }
            $M_0[0] \gets LCP(\interval{\beta}{1}{\len{\beta}},\interval{w}{1}{n})$\;
            compute $g$ such that $F_g \leq M_0[0] < F_{g+1}$ ($g\gets -1$ if $F_0>M_0[d]$)\;
            \eIf{$g = -1$}{
                \For{$d \gets [1$ \KwTo $n]$}{
                    $M_d[0] \gets -\infty$\;
                }
            }
            {
                \For{$d \gets [1$ \KwTo $n]$}{
                    $M_d[0] \gets F_g + \LCP(\interval{\beta}{F_g+1}{|\beta|},\interval{w}{F_g+1+d}{n})$\;
                    update $g$ such that $F_g \leq M_d[0] < F_{g+1}$\;
                }
            }
            $g \gets -1$\; 
            compute $d'=\min\{d\leq n\mid M_d[0]=m\}$\; 
            \For{$\delta \gets[1$ \KwTo $\Delta]$}{
                \For{$d \gets[d'+1$ \KwTo $n]$}{
%                    \If{$M_{d}[\delta-1] = m$}{continue\;}
                    update $g$ such that $F_g \leq M_{d-1}[\delta] < F_{g+1}$\;
                    \[
                        H \gets max 
                            \begin{cases}
                                M_{d-1}[\delta-1],  & \text{diagonal below}\\ 
                                F_g, & \text{for $F_g$ with $F_g \leq M_{d-1}[\delta] < F_{g+1}$} \hspace{0.5cm}; \\
                                M[d][\delta-1] + 1, & \text{same diagonal}\\
                                M[d+1][\delta-1]+1,  & \text{diagonal above}\\
                            \end{cases}
                    \] %\;
                    $M[d][\delta] \gets H+\LCP(\interval{\beta}{H+1}{|\beta|},\interval{w}{H+d+1}{n})-1$\;
                    \If{$(d = n-|\beta|) ~\wedge~ (M[d][\delta] = |\beta|)$}{
                        \Return $\delta$\;
                    }
                }        
                maintain $d'=\min\{d''\leq |\beta| \mid M_{d''}[\delta-s]=|\beta|$ for some $s\geq 0\}$\; 

            }
            \Return No solution with $\Delta$ edit operations.\;
        \end{algorithm}
    }
    \caption{Algorithm to compute the relevant values of $M$ in $\mathcal{O}((n+m)\Delta)$ time.}
    \label{fig:diagalgo}
\end{figure}

\subsection*{Proof of Theorem \ref{thm:minMisMatch}}
\thmMinMisMatch*
\begin{proof}
We use the algorithm of Theorem \ref{thm:misMatch} for $\Delta=2^i$, for increasing values of $i$ starting with $1$ and repeating until the algorithm returns a positive answer and computes $\Phi=\edist{\alpha}{w}$. The algorithm is clearly correct. Moreover, the value of $i$ which was considered last is such that $2^{i-1}< \Phi\leq 2^{i}$. So $i=\lceil \log_2 \Phi\rceil$, and the total complexity of our algorithm is $O(n\sum_{i=1}^{\lceil \log_2 \Phi\rceil}2^i)=O(n\Phi)$.
\qed \end{proof}

\subsection*{Lower bound for $\misMatch_{\regPat}$}
The results of Theorems \ref{thm:misMatch} and \ref{thm:minMisMatch} are complemented by the following lower bound, known from the literature \cite{editDistanceQuadraticHardness}.
Firstly, we recall the $\OV$ problem.
\begin{problemdescription}
  \problemtitle{Orthogonal Vectors (for short, $\OV$)}
  \probleminput{ Two sets $U,V$ consisting each of $n$ vectors from $\set{0,1}^d$, where $d\in \omega(\log n)$.}
  \problemquestion{Do vectors $u \in U, v \in V$ exist, such that $u$ and $v$ are orthogonal, i.e., for all $1 \leq k \leq d$, $v[k] u[k] = 0$ holds?}
\end{problemdescription}

It is clear that, for input sets $U$ and $V$ as in the above definition, one can solve $\OV$ trivially
%, by checking for all pairs of vectors their orthogonality in $\mathcal{O}(d)$ time. This solution requires 
in $\mathcal{O}(n^2d)$ time. The following conditional lower bound is known.
\begin{restatable}[$\OV$-Conjecture]{lemma}{OVC}
$\OV$ can not be solved in $\mathcal{O}(n^{2-\epsilon} d^{c})$ for any $\epsilon > 0$ and constant $c$, unless the Strong Exponential Time Hypothesis (SETH) fails.  
\end{restatable}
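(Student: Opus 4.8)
The plan is to treat this as a standard statement from the fine-grained complexity literature rather than to prove it from scratch; the lemma is exactly the contrapositive formulation of the well-known consequence of SETH due to Williams. So the proof would proceed by reduction from $k$-\textsf{SAT}. First I would fix an instance $\varphi$ of $k$-\textsf{SAT} on $N$ variables with $M$ clauses, and split the variable set into two halves of size $N/2$ each. For each of the $2^{N/2}$ partial assignments to the first half I create a vector $u$ of dimension $d = M$, where $u[\ell] = 0$ exactly when clause $\ell$ is already satisfied by that partial assignment; symmetrically, for each partial assignment to the second half I create a vector $v$ of dimension $d = M$ with $v[\ell] = 0$ exactly when clause $\ell$ is satisfied by that half. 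This gives two sets $U, V$ each of $n = 2^{N/2}$ vectors in $\{0,1\}^d$ with $d = M = \mathrm{poly}(N) \subseteq \omega(\log n)$ (after padding $d$ if necessary to guarantee the $\omega(\log n)$ requirement, which does not affect the argument).

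The key correctness step is the observation that $\varphi$ is satisfiable if and only if there exist $u \in U$ and $v \in V$ that are orthogonal: an orthogonal pair means that for every clause $\ell$, at least one of $u[\ell], v[\ell]$ is $0$, i.e. every clause is satisfied by the union of the two partial assignments, which is a full satisfying assignment; conversely a satisfying assignment decomposes into its two halves, yielding such a pair. Hence a hypothetical algorithm solving $\OV$ in time $\mathcal{O}(n^{2-\epsilon} d^{c})$ would solve $k$-\textsf{SAT} in time $\mathcal{O}((2^{N/2})^{2-\epsilon} \cdot \mathrm{poly}(N)) = \mathcal{O}(2^{(1-\epsilon/2)N} \cdot \mathrm{poly}(N))$, which is $2^{(1-\delta)N}$ for $\delta = \epsilon/2 > 0$, uniformly over all $k$; this contradicts SETH. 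The reduction itself runs in time $\mathcal{O}(2^{N/2} \cdot \mathrm{poly}(N))$, which is dominated by the claimed running time of the hypothetical $\OV$ algorithm, so it does not interfere with the bound.

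Since the paper's Theorem~\ref{thm:regLowerBound} is cited as following from \cite{editDistanceQuadraticHardness}, the only thing I actually need here is to state this lemma correctly and attribute it; the subsequent reduction from $\OV$ (or directly from the edit distance lower bound instance of \cite[Thm.~3]{editDistanceQuadraticHardness}) to $\misMatch_{\regPat}$ with the restricted pattern $\alpha = x u y$ is what carries the content. The main subtlety — and the step I would be most careful about — is the $d \in \omega(\log n)$ side condition: one must ensure the dimension of the produced vectors is genuinely superlogarithmic in the number of vectors (trivially true here since $d = M$ while $n = 2^{N/2}$, so $\log n = N/2$ and for non-trivial instances $M \geq N$), and one must confirm that the $d^{c}$ factor with arbitrary constant $c$ is harmless, which it is because $d = \mathrm{poly}(N)$ contributes only a polynomial factor that is absorbed into the $2^{(1-\delta)N}$ bound. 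With these checks in place the lemma follows.
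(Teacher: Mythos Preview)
Your proposal is correct and matches the paper's treatment: the paper does not prove this lemma at all but simply states it as a known result with citations to Williams and to Bringmann's survey. Your additional sketch of the standard split-and-list reduction from $k$-\textsf{SAT} is the correct argument and goes beyond what the paper provides; the care you take with the $d\in\omega(\log n)$ side condition and the absorption of the $d^{c}$ factor is appropriate and not something the paper discusses.
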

See \cite{DBLP:conf/stacs/Bringmann19,DBLP:journals/tcs/Williams05} and the references therein for a detailed discussion regarding conditional lower bounds related to OV. In this context, the following result is an immediate consequence of \cite[Thm.  3]{editDistanceQuadraticHardness}.
\regLowerBound*

\subsection*{Proof of Theorem \ref{thm:unaryW1}}
\thmUnaryW*
Before starting the proof of Theorem \ref{thm:unaryW1} we need the following technical lemma.
\begin{restatable}{lemma}{lemReductionHelp}\label{lem:reductionHelp}
Let $\$$ and $\#$ be two letters and let $S$, $g$, and $\ell$ be integers. If $g\geq 0$, $2g\leq S$, $\frac{S}{2}\leq \ell-g$, and $\ell\leq S$ then:
\begin{enumerate}
    \item $\edist{\$^g (\$^S\#^S)^{S-1} \$^S\#^\ell}{(\$^S\#^S)^{S}}=g+(S-\ell);$
    \item $\edist{\$^{\ell}\#^S(\$^S\#^S)^{S-1}\#^g }{(\$^S\#^S)^S}=g+(S-\ell).$
\end{enumerate}

\end{restatable}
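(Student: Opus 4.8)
\textbf{Proof plan for Lemma~\ref{lem:reductionHelp}.}
The plan is to prove both statements by the same two-sided argument: first exhibit an explicit alignment achieving cost $g+(S-\ell)$, then argue no alignment can do better. By the symmetry of the two expressions under reversal of strings (reading $\$^g (\$^S\#^S)^{S-1} \$^S\#^\ell$ backwards and swapping $\$ \leftrightarrow \#$ turns part~1 into part~2, and edit distance is invariant under reversal), it suffices to prove part~1 in full detail and then remark that part~2 follows by this symmetry. So I would set $u=\$^g (\$^S\#^S)^{S-1} \$^S\#^\ell$ and $v=(\$^S\#^S)^{S}$ and show $\edist{u}{v}=g+(S-\ell)$.

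\emph{Upper bound.} The idea is to delete the $g$ leading $\$$'s of $u$ and then insert $S-\ell$ trailing $\#$'s; after deleting the prefix $\$^g$, what remains is $(\$^S\#^S)^{S-1}\$^S\#^\ell$, and appending $\#^{S-\ell}$ gives exactly $(\$^S\#^S)^{S-1}\$^S\#^S=(\$^S\#^S)^S=v$. This uses $g$ deletions and $S-\ell$ insertions, for a total of $g+(S-\ell)$ edits; hence $\edist{u}{v}\le g+(S-\ell)$. (Note $S-\ell\ge 0$ by the hypothesis $\ell\le S$, and $g\ge 0$, so this count is well-defined.)

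\emph{Lower bound.} This is the real work. I would count occurrences of each letter: $u$ contains $g+S\cdot S=g+S^2$ copies of $\$$ and $(S-1)S+\ell=S^2-S+\ell$ copies of $\#$, while $v$ contains $S^2$ copies of each. An alignment with $a$ substitutions, $b$ insertions and $c$ deletions changes the multiset of letters in a controlled way; a cleaner route is to observe that any alignment of cost $t$ yields, after removing substituted/inserted/deleted positions, a common subsequence of $u$ and $v$ of length $\ge |u|-t$ and $\ge |v|-t$, equivalently $\mathrm{LCS}(u,v)\ge \max(|u|,|v|)-t$, and conversely $\edist{u}{v}\ge |u|-\mathrm{LCS}(u,v)$ and $\ge |v|-\mathrm{LCS}(u,v)$. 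Since $|u|=g+S^2+\ell > |v|=2S^2$ is false in general, I would instead use the sharper bound $\edist{u}{v}\ge \max\bigl(|u|-\mathrm{LCS}(u,v),\ |v|-\mathrm{LCS}(u,v)\bigr)$ together with an upper bound on $\mathrm{LCS}(u,v)$. The key combinatorial claim is $\mathrm{LCS}(u,v)\le |v|-(S-\ell)=2S^2-(S-\ell)$ and $\mathrm{LCS}(u,v)\le |u|-g$: a common subsequence is a block-structured string $\$^{a_0}\#^{b_1}\$^{a_1}\#^{b_2}\cdots$ that embeds into the $2S$ alternating blocks of $v$ of size $S$ each, and into $u$ whose $\$$-runs are $g, S, S,\dots,S$ and $\#$-runs are $S,\dots,S,\ell$; using $2g\le S$ and $S/2\le \ell-g$ (so that the $g$ extra leading $\$$'s of $u$ genuinely cannot be "absorbed" for free and the short tail $\#^\ell$ costs $S-\ell$) one shows the best common subsequence must sacrifice at least $g$ symbols relative to $u$ and at least $S-\ell$ symbols relative to $v$, and moreover these losses cannot be shared, giving $\edist{u}{v}\ge g+(S-\ell)$.

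\textbf{Main obstacle.} The delicate point is the lower bound's combinatorial core: showing that the $g$ deletions forced by the extra prefix $\$^g$ and the $S-\ell$ insertions forced by the truncated suffix $\#^\ell$ cannot be "amortized" against each other or against a cheaper global realignment of the $2S$ blocks. This is exactly where the hypotheses $2g\le S$ and $\tfrac{S}{2}\le \ell-g$ must be used: they guarantee the blocks are long enough that shifting the alignment of whole blocks is never profitable, so an optimal alignment aligns the $i$-th $\$$-block of $u$ (for $i\ge 1$) to the $i$-th $\$$-block of $v$ and similarly for $\#$-blocks, leaving precisely the prefix $\$^g$ to delete and the suffix $\#^{S-\ell}$ to insert. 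I would make this rigorous either by a direct exchange/shifting argument on an optimal alignment, or by the LCS computation sketched above with careful case analysis on how many of $v$'s $2S$ blocks each run of the common subsequence spans.
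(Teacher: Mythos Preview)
Your upper bound and the symmetry reduction of part~2 to part~1 are correct and match the paper exactly. The issue is your lower-bound plan.

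The LCS route, as you sketch it, does not yield a tight enough bound. With $u=\$^g(\$^S\#^S)^{S-1}\$^S\#^\ell$ and $v=(\$^S\#^S)^S$ one has $|u|=2S^2+g-(S-\ell)$, $|v|=2S^2$, and $\mathrm{LCS}(u,v)=2S^2-(S-\ell)$ (drop the leading $\$^g$ from $u$). The inequality $\edist{u}{v}\ge\max(|u|,|v|)-\mathrm{LCS}(u,v)$ then gives only $\max(g,\,S-\ell)$, not $g+(S-\ell)$. Your clause ``these losses cannot be shared'' is precisely the missing content, and it is not an LCS fact; it needs a separate argument that you do not supply.

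The paper's lower bound is shorter and avoids LCS entirely. It is a budget argument that crucially uses $\tfrac{S}{2}\le\ell-g$, equivalently $\ell\ge g+(S-\ell)$. Consider an optimal alignment; by the upper bound it has at most $g+(S-\ell)\le\ell$ edits. The suffix $\#^S$ of $v$ is produced from some suffix $u''$ of $u$; if $u''$ were not contained in the final block $\$^S\#^\ell$ then $|u''|\ge S+\ell+1$, forcing $\edist{u''}{\#^S}\ge|u''|-S\ge\ell+1>g+(S-\ell)$, impossible. Hence $u''$ is a suffix of $\$^S\#^\ell$, so it contains at most $\ell$ copies of $\#$ and turning it into $\#^S$ costs at least $S-\ell$. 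This leaves a budget of at most $g$ edits for the remaining prefix, so by the same length argument the prefix $\$^S\#^S$ of $v$ must come from a prefix of $\$^{g+S}\#^S$ in $u$; any such prefix either has $g$ excess $\$$'s (each of which must be deleted or substituted) or is short enough that producing $\$^S\#^S$ already costs $\ge S\ge 2g\ge g$. Either way the prefix costs at least $g$, giving total cost $\ge g+(S-\ell)$.

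So your ``direct exchange/shifting'' alternative is the right instinct, but the concrete mechanism is this two-sided localisation via the edit budget, not an LCS count.
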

\begin{proof}
We only show the first claim, as the second follows identically (as it is symmetrical). 

Firstly, it is clear that $g+(S-\ell)$ edits suffice to transform $\$^g (\$^S\#^S)^{S-1} \$^S\#^\ell $ into $(\$^S\#^S)^{S}$.

Now, we will show that we cannot transform $\$^g (\$^S\#^S)^{S-1} \$^S\#^\ell $ into $(\$^S\#^S)^{S}$ with fewer than $S-\ell + g$ edits. 

Note that from $\frac{S}{2}\leq \ell-g$ we get $\ell \geq g+(S-\ell)$. So, the suffix $\#^S $ of $(\$^S\#^S)^{S}$ must be obtained by a series of edits from a suffix of the suffix $\$^S\#^\ell$ of $\$^g (\$^S\#^S)^{S-1} \$^S\#^\ell $. This means that at least $S-\ell$ edits must be performed in the respective suffix to obtain $S$ symbols $\#$. This leaves us with at most $g$ edits remaining to obtain $(\$^S\#^S)^{S-1} \$^S $. In particular, this means that the prefix $\$^S\#^S$ of $(\$^S\#^S)^{S-1} \$^S $ must be obtained from a prefix of the prefix $ \$^g \$^S \#^S$ of $\$^g (\$^S\#^S)^{S-1} \$^S\#^\ell $. As, in the best case, $g$ $\$$ symbols need to be substituted or removed, it follows that we need to use $g$ edits to obtain the prefix $\$^S\#^S$ of $(\$^S\#^S)^{S}$. As such, we already had to use $g+(S-\ell)$ edits to transform $\$^g (\$^S\#^S)^{S-1} \$^S\#^\ell $ into $(\$^S\#^S)^{S}$, so it cannot be done with fewer edits. The conclusion follows.
\qed \end{proof}

We can now proceed with the proof of Theorem \ref{thm:unaryW1}. 

\begin{proof}
{\bf Preliminaries.}
We begin by recalling the following problem:
\begin{problemdescription}
    \problemtitle{Median String: $\MS$}
    \probleminput{$k$ strings $w_1, \ldots, w_k \in \Sigma^*$ and an integer $\Delta$.}
    \problemquestion{Does there exist a string $s$ such that $\sum_{i=1}^k \edist{w_i}{s}\leq \Delta$?\newline (The string $s$ for which $\sum_{i=1}^k \edist{w_i}{s}$ is minimum is called the median string of the strings $\{w_1,\ldots,w_k\}$.)}
\end{problemdescription}
Without loss of generality, we can assume that $\Delta\leq \sum_{i=1}^k |w_i|$ as, otherwise, the answer is clearly yes (for instance, for $s=\varepsilon$ we have that $\sum_{i=1}^k \edist{w_i}{\varepsilon}\leq \sum_{i=1}^k |w_i|$). Similarly, we can assume that $|s|\leq \Delta + \max\{|w_i|\mid i\in \{1,\ldots,k\} \}.$

In \cite{weightedMedianStringHardness5} it was shown that $\MS$ is NP-complete even for binary input strings and W[1]-hard with respect to the parameter $k$, the number of input strings. 

\smallskip

{\bf Reduction: intuition and definition.}
We will reduce $\MS$ to $\misMatch_\oneVarPat$, such that an instance of $\MS$ with $k$ input strings is mapped to an instance of $\misMatch_\oneVarPat$ with exactly $k$ occurrences of the variable $x$ (the single variable occurring in the pattern). 

Thus, we consider an instance of $\MS$ which consists in the $k$ binary strings $w_1, \ldots, w_k \in \{0,1\}^*$ and the integer $\Delta$. As mentioned above, we can assume that in this instance $\Delta\leq \sum_{i=1}^k |w_i|$. 

The instance of $\misMatch_\oneVarPat$ which we construct consists of a word $w$ and a pattern $\alpha$, such that $\alpha$ contains exactly $k$ occurrences of a variable $x$, and both strings are of polynomial size w.r.t. the size of the $\MS$-instance. Moreover, the bound on the $\edist{\alpha}{w}$ defined in this instance equals $\Delta$. That is, if there exists a solution for the $\MS$-instance such that $\sum_{i=1}^k \edist{w_i}{s}\leq \Delta$, then, and only then, we should be able to find a solution of the $\misMatch_\oneVarPat$-instance with $\edist{\alpha}{w}\leq \Delta$. 

The construction of the $\misMatch_\oneVarPat$ instance is realized in such a way that the word $w$ encodes the $k$ input strings, conveniently separated by some long strings over $\{\$,\#\}$ (where $\$,\#$ are two fresh symbols), while $\alpha$ can be obtained from $w$ by simply replacing each of the words $w_i$ by a single occurrence of the variable $x$. Intuitively, in this way, for $\edist{\alpha}{w}$ to be minimal, $x$ should be mapped to the median string of $\{w_1,\ldots,w_k\}$.

We can now formally define the reduction.

For the $k$ binary strings $w_1, \ldots w_k \in \{0,1\}^*$defining the instance of $\MS$, let $S=6 (\sum_{i=1}^k|w_i|)$; clearly $S\geq 6 \Delta$. Let now $w=w_1(\$^S \#^S)^Sw_2(\$^S \#^S)^S \ldots w_k(\$^S \#^S)^S$ and $\alpha = \left( x(\$^S \#^S)^S \right)^k$. 

\smallskip

{\bf Reduction: correctness.} We prove first the correctness of the reduction, that is, the following claim: the instance of $\MS$ defined by $w_1,\ldots,w_k$ and $\Delta$ is answered positively if and only if the instance of $\misMatch_\oneVarPat$ defined by $w,\alpha,\Delta$ is answered positively.

Assume first that the instance of $\MS$ defined by $w_1,\ldots,w_k$ and $\Delta$ is answered positively. Then, it is immediate to see that $\edist{\alpha}{w}\leq \Delta$. Indeed, let $w'= \left( s(\$^S \#^S)^S \right)^k$ be the word obtained from $\alpha $ by replacing $x$ with the median string $s$ of $w_1,\ldots,w_k$. Then, clearly, $\edist {w'}{w}\leq \Delta$. 

Now, assume that the instance of $\misMatch_\oneVarPat$ defined by $w,\alpha,\Delta$ is answered positively. This means that there exists some word $t\in\{0,1,\$,\#\}^*$ such that $\edist{u}{w}\leq \Delta$ for $u=\left( t(\$^S \#^S)^S \right)^k$. 

Therefore, there exists an optimal (w.r.t. length) sequence of edits $\gamma$ which transforms $u$ into $w$, such that the length of $\gamma$ is at most $\Delta$. As explained in the preliminaries, we can assume that the edits in the sequence $\gamma$ are ordered increasingly by the position of $u$ to which they are applied (i.e., left to right). Our road plan is to show that if such a sequence of edits $\gamma$ exists, then there exists a sequence $\delta$ of edits of equal length (so also optimal) transforming $u$ into $w$, such that the edits rewrite the $i^{th}$ occurrence of the factor $t$ in $w$ into $w_i$, for $i$ from $1$ to $k$, and leave the rest of the string $u$ unchanged.

Let $u_1$ be the shortest prefix of $u$ from which we obtain the prefix $w_1(\$^S\#^S)^S$ of $w$ when applying the edits of $\gamma$. Clearly, $|w_1(\$^S\#^S)^S| - S\leq |u_1|\leq |w_1(\$^S\#^S)^S| + S $ (as the overall distance between $u$ and $w$ is upper bounded by $\Delta \leq S$). Let now $u'_1$ be the longest prefix of $u_1$ from which we obtain $w_1$ when applying the edits of $\gamma$, and let $u_1=u'_1u''_1$. Clearly, the edits of $\gamma$ transform $u''_1$ into $(\$^S\#^S)^S$. We are now performing a case analysis. 

\noindent {\bf Case 1:} $|u'_1| \leq |t|$.

\noindent {\bf Case 1.1:} $u''_1=v (\$^S\#^S)^S s $, where $v,s\in \{0,1,\#,\$\}^*$ and $v$ is a suffix of $t$ and $s$ a prefix of $(t(\$^S\#^S)^{S})^{k-1}$. As $|u''_1|=2S^2 + |v|+|s|$, then at least $|v|+|s|$ edits are needed to transform $u''_1$ into $(\$^S\#^S)^S$. We can modify $\gamma $ such that these operations are deletions of all symbols of $v$ and $s$, and obtain a new sequence of edits $\gamma'$. 

\noindent {\bf Case 1.2:} $u''_1=v (\$^S\#^S)^{S-1} s $, where $v\in \{0,1,\#,\$\}^*$ is a suffix of $t$ and $s=\$^S\#^\ell$ for some $\ell\geq S-\Delta$. We thus have $t=u'_1v$ and $|v|\leq 2\Delta $ (because $\left | |u''_1|-2S^2\right | \leq \Delta $). Further, when applying the operations of $\gamma$, after all the edits in $u_1$ were performed, we obtain $w_2(\$^S\#^S)^S\ldots w_k(\$^S\#^S)^S$ from $\#^{S-\ell}(t(\$^S\#^S)^{S})^{k-1}$ optimally. Hence, from $u''_1$ we obtain $(\$^S\#^S)^{S}$ so, after performing the $p$ edits corresponding to positions of $v$ (excluding the potential insertions on positions occurring to the right of the last symbol of $v$), we must edit them into $\$$ letters, so we must obtain a string $\$^g (\$^S\#^S)^{S-1} \$^S\#^\ell $ for some $0\leq g\leq 2 \Delta$. It is immediate that $p+g\geq |v|$ (as when counting the $p$ edit operations, we count the symbols which were deleted from $v$, while all the symbols which were substituted in $v$ correspond to distinct positions of $\$^g$). Now, by Lemma \ref{lem:reductionHelp}, since $g\leq 2 \Delta$, $S-\ell \leq \Delta$, and $S\geq 6\Delta$, we get that the minimum number of edits needed to transform $\#^g (\$^S\#^S)^{S-1} \#^S\$^\ell $ into $(\$^S\#^S)^{S} $ is $g+(S-\ell)$. So, to transform $u''_1$ into $(\$^S\#^S)^{S} $ we use $p+g+S-\ell\geq |v|+S-\ell$ edits. We can, therefore, modify $\gamma $ to obtain a new sequence of edits $\gamma'$, which has at most the same length as $\gamma$, in which we first apply all the edit operations from $\gamma$ to $u'_1$, then we delete all symbols of $v$, then we simply leave $(\$^S\#^S)^{S}$ alone, then we insert $\#^{S-\ell}$ after $(\$^S\#^S)^{S}$, and we continue by editing $\#^{S-\ell}(t(\$^S\#^S)^{S})^{k-1}$ into $w_2(\$^S\#^S)^S\ldots w_k(\$^S\#^S)^S$ exactly as in $\gamma$. Clearly, we have just replaced $p+g+S-\ell$ operations in $\gamma$ by $|v|+S-\ell$ edits to obtain $\gamma'$. As $\gamma$ was of optimal length, and $p+g+S-\ell \geq |v|+S-\ell$, we have that $\gamma'$ must be of optimal length too. 

\noindent {\bf Case 2:} $|u'_1| > |t|$. Then $u'_1= t\$^{S-\ell} $, for some $\ell$ such that $0<S-\ell \leq \Delta$. 

\noindent {\bf Case 2.1:} $u''_1= \$^{\ell }\#^S(\$^S\#^S)^{S-2} \$^S \#^{S-g} $ for some $g$ such that $(S-\ell) + g \leq \Delta$. Moreover, when considering the sequence $\gamma$, we have that $\#^g (t (\$^S\#^S)^{S})^{k-1}$ is transformed into $w_2(\$^S\#^S)^{S}\ldots w_k(\$^S\#^S)^{S}$ optimally after the edits in $u_1$ are performed. As $|u''_1|=2S^2 - g - (S - \ell)$, then at least $S - \ell + g$ edits are needed to transform $u''_1$ into $(\$^S\#^S)^S$. Now we can modify $\gamma$ as follows. We first note that, in $\gamma$, the suffix $\$^{S-\ell}$ of $u'_1$ has to be completely rewritten to obtain $w_1$ (as $w_1$ does not contain $\$ $ symbols). Therefore, we  transform $t$ into $w_1$ by simulating the edits performed in the suffix $\$^{S-\ell} $ by only applying insertions after the last symbol of $t$ (instead of substitutions in $\$^{S-\ell}$ we do insertions, the insertions are done as before, and the deletions from $\$^{S-\ell}$ are not needed anymore); the number of these insertions is at most as big as the number of initial edits applied to the suffix $\$^{S-\ell}$ of $u'_1$. Then, the factor $(\$^S \#^S)^S$ following the first $t$ in $u$ is not edited, as it corresponds to the identical factor of $w$ which follows $w_1$, and then we insert after the first factor $(\$^S \#^S)^S$ of $u$ a factor $\#^g$, with $g$ insertions, and then continue editing $\#^g (t (\$^S\#^S)^{S})^{k-1}$ to obtain $w_2(\$^S\#^S)^{S}\ldots w_k(\$^S\#^S)^{S}$ as in $\gamma$. The resulting sequence $\gamma'$ of edits is at least ${S-\ell}$ edits shorter than $\gamma$, with $S-\ell >0$. As $\gamma$ was optimal, this is a contradiction, so this case is not possible.  

\noindent {\bf Case 2.2:} $u''_1= \$^{\ell}\#^S (\$^S\#^S)^{S-1} s $ where $0<S- \ell \leq \Delta$ and $s\in \{0,1,\#,\$\}^*$ is a prefix of $(t(\$^S\#^S)^{S})^{k-1}$. In this case, in $\gamma$, we have that the suffix $u'$ of $u$ occurring after $u_1$ is transformed into $w_2(\$^S\#^S)^{S}\ldots w_k(\$^S\#^S)^{S}$ optimally after the edits in $u_1$ are performed. Now, the suffix $s$ is transformed, by $p$ edits into $\#^g$ for some $g \leq 2 \Delta $, and we have $p+g \geq |v|$ (similarly to the Case 1.2). By Lemma \ref{lem:reductionHelp}, as in Case 1.2, we get that the minimum number of edits needed to transform $\$^\ell\#^S(\$^S\#^S)^{S-1}\#^g $ into $(\$^S\#^S)^S$ is $g+(S-\ell)$. So, overall, the number of edits needed to transform $\$^{\ell}\#^S (\$^S\#^S)^{S-1} s $ into $(\$^S\#^S)^S$ is $(S-\ell)+g+p\geq (S-\ell)+|s|$. Therefore, we can modify $\gamma$ as follows to obtain a new optimal sequence of edits $\gamma'$. As in Case 2.1 we simulate the edits in the suffix $\$^{S-\ell} $ of $u'_1$ by insertions. Then, the factor $(\$^S\#^S)^{S}$ is left unchanged. Then we simply delete the letters of $s$, and we continue by editing $u'$ as in $\gamma$ to obtain $w_2(\$^S\#^S)^{S}\ldots w_k(\$^S\#^S)^{S}$. Clearly, in $\gamma'$ we have at least $S-\ell$ edits less than in $\gamma$, with $S-\ell >0$. As $\gamma$ was optimal, his is a contradiction, so this case is also not possible.  

This concludes our case analysis.

In all possible cases (1.1 and 1.2), in the newly obtained sequence $\gamma'$ of edits, which has the same optimal length as $\gamma$, we have that the prefix $t$ of $u$ is transformed into $w_1$ by a sequence of edits $\gamma'_1$ (which ends with the deletion of the suffix $v$ of $t$), the first factor $(\$^S\#^S)^S$ of $u$ is then trivially transformed  (by an empty sequence of edits) into the first factor $(\$^S\#^S)^S$ of $w$, and then $(t(\$^S\#^S)^{S})^{k-1}$ is transformed into $w_2(\$^S\#^S)^S\ldots w_k(\$^S\#^S)^S$ by an optimal sequence of edits $\gamma'_2$ (which starts, in Case 1.1, the deletion of the prefix $s$ of $(t(\$^S\#^S)^{S})^{k-1}$ or, in Case 1.2, with the insertion of a factor $\#^{S-\ell}$ before $(t(\$^S\#^S)^{S})^{k-1}$). 

Now, we can apply the same reasoning, inductively, to the optimal sequence of edits $\gamma'_2$ which transforms $(t(\$^S\#^S)^{S})^{k-1}$ into $w_2(\$^S\#^S)^S\ldots w_k(\$^S\#^S)^S$, and, we will ultimately obtain that there exists an optimal sequence of edits $\delta$ which transforms $u$ into $w$ by transforming the $i^{th}$ factor $t$ of $u$ into $w_i$, for all $i$ from $1$ to $k$, and leaving the rest of the symbols of $u$ unchanged. As the length of $\delta$ is at most $\Delta$, this means that for the string $t$ we have $\sum_{i=1}^k\edist{t}{w_i}\ leq \Delta$, so the instance defined by $w_1,\ldots,w_k$ and $\Delta$ of $\MS$ can be answered positively. This concludes the proof of our claim and, as such, the proof of the correctness of our reduction.

\smallskip

{\bf Conclusion.} 
The instance of $\misMatch_\oneVarPat$ (i.e., $\alpha, w,\Delta$) is of polynomial size w.r.t. the size of the $\MS$-instance. Therefore, the instance of $\minMisMatch_{\oneRepPat}$ can be computed in polynomial time, and our entire reduction is done in polynomial time. Moreover, we have shown that the instance $(w,\alpha,\Delta)$ of $\misMatch_\oneVarPat$ is answered positively if and only if the original instance of $\MS$ is answered positively. Finally, as the number of occurrences of the variable $x$ blocks in $\alpha$ is $k$, where $k$ is the number of input strings in the instance of $\MS$, and $\MS$ is $W[1]$-hard with respect to this parameter, it follows that $\misMatch_\oneVarPat$ is also $W[1]$-hard when the number of occurrences of the variable $x$ in $\alpha$ is considered as parameter. This completes the proof of our theorem.
\qed \end{proof}

\end{document}